\documentclass[11pt,reqno]{amsart}
\usepackage{amsfonts}
\usepackage{amssymb}
\usepackage{textcomp}
\usepackage{graphicx}
\oddsidemargin 0pt
\evensidemargin 0pt
\marginparwidth 40pt
\marginparsep 10pt
\topmargin 0pt
\textwidth 6.5in
\textheight 8.5in
\def\sqr#1#2{{\vcenter{\vbox{\hrule height.#2pt
        \hbox{\vrule width.#2pt height#1pt \kern#2pt
        \vrule width.#2pt}
        \hrule height.#2pt}}}}

\newcommand{\nc}{\newcommand}
\nc{\parent}[1]{$[\![#1]\!]$}
\newtheorem{theorem}{Theorem}[section]
\newtheorem{lemma}{Lemma}[section]

\newtheorem{corollary}{Corollary}[section]
\newtheorem{proposition}{Proposition}[section]
\newtheorem{remark}{Remark}
\newtheorem{definition}{Definition}[section]
\newtheorem{assumption}{Assumption}[section]

\newenvironment{pf-main}{{\sc Proof of Theorem \ref{mainresult}.}\hspace{3mm}}{\qed}
\nc{\eid}{\stackrel{d}{=}}
\nc{\cadlag}{c\`{a}dl\`{a}g } \nc{\ba}{\begin{array}}
\nc{\ea}{\end{array}} \nc{\be}{\begin{equation}}
\nc{\ee}{\end{equation}} \nc{\bea}{\begin{eqnarray}}
\nc{\eea}{\end{eqnarray}} \nc{\bean}{\begin{eqnarray*}}
\nc{\eean}{\end{eqnarray*}} \nc{\bu}{\bullet} \nc{\nn}{\nonumber}
\nc{\cA}{{\mathcal A}} \nc{\cB}{{\mathcal B}} \nc{\cC}{{\mathcal
C}} \nc{\cD}{{\mathcal D}} \nc{\bbD}{\mathbb{D}}
\nc{\cG}{{\mathcal G}} \nc{\cF}{{\mathcal F}} \nc{\cS}{{\mathcal
S}} \nc{\cU}{{\mathcal U}} \nc{\cH}{{\mathcal H}}
\nc{\cK}{{\mathcal K}}\nc{\cL}{{\mathcal L}}  \nc{\cM}{{\mathcal
M}} \nc{\cO}{{\mathcal O}} \nc{\cP}{{\mathcal P}}
\nc{\bbE}{\mathbb{E}} \nc{\bbF}{\mathbb{F}}
\nc{\bbEQ}{\mathbb{E}_{\mathbb{Q}}} \nc{\eps}{\varepsilon}
\nc{\bbEP}{\mathbb{E}_{\mathbb{P}}}\nc{\bbL}{\mathbb{L}}
\nc{\what}{\widehat} \nc{\bbP}{\mathbb{P}} \nc{\bbQ}{\mathbb{Q}}
\nc{\del}{\partial} \nc{\Om}{\Omega} \nc{\om}{\omega}
\nc{\bbR}{\mathbb{R}} \nc{\bbN}{\mathbb{N}} \nc{\fps}{$(\Om, \cF,
(\cF_t)_{t\geq 0}, \bbP)$} \nc{\bbC}{\mathbb{C}}
\nc{\bfr}{\begin{flushright}} \nc{\efr}{\end{flushright}}
\nc{\dXt}{\Delta X_{t}} \nc{\dXs}{\Delta X_{s}}
\nc{\bs}{\blacksquare} \nc{\dX}{\Delta X} \nc{\dY}{\Delta Y}
\nc{\dnkx}{\left(X(T^{n}_{k})-X(T^{n}_{k-1})\right)}
\nc{\esssup}{\mathrm{ess}\mbox{ }\mathrm{sup}}
\nc{\essinf}{\mathrm{ess}\mbox{ } \mathrm{inf}}
\nc{\dhats}{\widehat{\delta_s}} \nc{\half} {\frac{1}{2}}

\nc{\ol}{\overline}
\def\rar{\rightarrow}

\nc{\chf}{\mbox{$\mathbf1$}}
\begin{document}

\title{Financial equilibrium with asymmetric information and random horizon}
\author{Umut \c{C}etin}
\address{Department of Statistics, London School of Economics and Political Science, 10 Houghton st, London, WC2A 2AE, UK}
\email{u.cetin@lse.ac.uk}
\date{\today}
\begin{abstract}
We study in detail and explicitly solve the version of Kyle's model introduced in a specific case in \cite{BB}, where the trading horizon is given by an exponentially distributed random time. The first part of the paper is devoted to the analysis of time-homogeneous equilibria using  tools from the theory of one-dimensional diffusions. It turns out that such an equilibrium is only possible if the final payoff is Bernoulli distributed as in \cite{BB}. We show in the second part that the signal of the market makers use in the general case is a time-changed version of the one that they would have used had the final payoff had a Bernoulli distribution. In both cases we characterise explicitly the equilibrium price process and the optimal strategy of the informed trader. Contrary to the original Kyle model it is found that the reciprocal of market's depth, i.e. Kyle's lambda, is a uniformly integrable supermartingale. While Kyle's lambda is a potential, i.e. converges to $0$,  for the Bernoulli distributed final payoff, its limit in general is different than $0$. 
\end{abstract}
\maketitle

\section{Introduction}
The canonical model of markets with asymmetric information is due to
Kyle \cite{K}. Kyle studies a market for a single risky asset whose
price is determined in equilibrium. There are mainly three types of
agents that constitute the market: a strategic risk-neutral informed
trader with a private information regarding the future value of the
asset, non-strategic uninformed noise traders, and a number of risk-neutral
market makers competing for the net demand from the strategic and
non-strategic traders. The key feature of this model is that the asset value becomes public knowledge at a fixed deterministic date and the market makers cannot distinguish  the
informed trades from the uninformed ones but `learn'
from the net demand by `filtering' what the informed trader knows,
which is `corrupted' by the noise demand. The private information of
the informed trader is static, i.e. it is obtained at the beginning of
the trading and does not change over time. The nature of this
information is not really important: It could be inside
information about the future payoff of the asset or an
unbiased estimator of the future payoff. The latter is a more suitable
interpretation when the strategic informed trader is a big
investment bank with a good research division producing sophisticated
research that is not shared with the public.

Kyle's model is in discrete time and assumes that the noise traders
follow a random walk and the future payoff of the asset has a normal
distribution. This has been extended to a continuous time framework with general payoffs by
Back \cite{B}. In this extension the total demand of the noise traders is given by a
Brownian motion and the future payoff of the asset has a general
continuous distribution while the informed trader's private
information is still static.  Kyle's model and its continuous-time
extension by Back have been further extended to allow multiple
informed traders (\cite{fw}, \cite{bcw}), to include default risk \cite{CampiCetin} or to the case when the  single informed trader
receives a continuous signal as private information (\cite{BP} and \cite{GBP}).

Our first goal in this paper is to study a time-homogeneous  version of this model introduced by Back and Baruch in \cite{BB}. The time-homogeneity refers to the SDE corresponding to the market price having time homogeneous coefficients, and the insider's optimal decisions depending only on the price of the traded asset - not on time.  This is in part due to the assumption that the asset value, $\Gamma$, is announced at a random exponential time, $\tau$, with mean $r^{-1}$ for some $r>0$ and independent from all other variables in the model. The informed trader knows  $\Gamma$ but not $\tau$. Thus, she has an informational advantage over the other traders; however, this can end at any time since $\tau$ will come as a suprise to the informed trader as it does to the others.  Back and Baruch compute the market depth and the informed trader's strategy as a function of price, which is only characterised implicitly using an inverse operation. Moreover, their method does not allow an explicit  characterisation of the distribution of the equilibrium price process. 

In Section \ref{s:Bernoulli} we analyse the model of Back and Baruch using tools from the theory of one-dimensional diffusions. We show that in equilibrium the market makers construct  a transient Ornstein-Uhlenbeck process, $Y$, that they will use for pricing. A particular consequence of this is that  the pricing rule becomes a {\em scale function} of $Y$. In addition we  identify the value function, $J$, of the strategic trader with an {\em $r$-excessive function} of a certain {\em $h$-transform} of the Ornstein-Uhlenbeck process. 

Back and Baruch postulate the equilibrium controls for the market makers and the informed trader and then verify that these indeed constitute an equilibrium. We, on the other hand, study in detail the admissible pricing rules for the market makers given a large class of admissible strategies of the informed trader and characterise the ones that can appear in the equilibrium. It turns  out that the market makers can choose from a continuum of controls in the equilibrium; however, every such choice will lead to the same SDE for the equilibrium price process.

Consistent with what was observed earlier by Back and Baruch we show that the process measuring the price impact of trades, i.e. {\em Kyle's lambda}, is a uniformly integrable supermartingale converging to $0$, i.e. a potential.  As a result, the market gets more liquid on average as time passes. This is a deviation from  Kyle \cite{K} who predicted that the Kyle's lambda must follow a martingale preventing `systematic changes' in the market depth as explained in \cite{bcw} and \cite{CD}.  In the absence of such systematic changes, the insider cannot acquire a large position when the depth is low to liquidate at a later date when the liquidity is higher to obtain unbounded profits. In the model we study, even if the market gets more liquid as time passes,  there are  no opportunities for the informed trader to make infinite profits since the market can end at any time interval $[t,t+dt]$ with probability $rdt$.  

One of the advantages of the approach used in this paper is that it identifies the distribution of the price process explicitly.  The explicit form of the solutions,  however, also indicate that one cannot have an equilibrium, where the price process is a time-homogeneous  diffusion, if the asset value has a non-Bernoulli distribution as explained in Remark \ref{r:extension}.  A recent work by Collin-Dufresne et al. \cite{CFM}, on the other hand, hints at the direction that one should follow for a general payoff distribution.  Collin-Dufresne et al. \cite{CFM} study a version of the Kyle model where the announcement date is a jump time of a Poisson process with non-constant intensity and the asset value has a Gaussian distribution using linear Kalman filtering. It turns out in their model that  the diffusion coefficient of the SDE for the equilibrium price should be time dependent. 

Motivated by their result we study in Section \ref{s:general} an extension of the model of Back and Baruch to a large class of payoff distributions. It turns out that the signal that the market makers use when the payoff has a general distribution is a time-changed transient Ornstein-Uhlenbeck process. The time change, $V(t)$, is deterministic with $V(\infty)<\infty$. The finiteness of the time-change implies that the limiting distribution of the market makers' signal is a non-degenerate normal distribution. This particular feature allows us to extend the model of Back and Baruch to a much more general setting, including the normally distributed case considered in \cite{CFM}.  

As in the other works on the Kyle model we  establish that the informed trader's trades are {\em inconspicuous} in the equilibrium. That is, the equilibrium distribution of the total demand is the same as that of cumulative noise trades. An essential difference, on the other hand, from the earlier works on this subject is that the equilibrium prices exhibit a jump at the announcement date, $\tau$. This is only natural since  $\tau$ is unknown to the informed trader and, thus, there is no strategy to  ensure that the market price converges almost surely to $\Gamma$ as time approaches to  $\tau$, which is a totally inaccessible stopping time even for the informed. However, what the informed trader can do is the following strategy, which will in fact turn out to be her equilibrium strategy: She can make sure that $P_t$, the market price conditioned on {\em survival}, i.e., $[\tau>t]$, converge to $\Gamma$ as $t \rar \infty$. That is, conditioned on indefinite survival market prices converge to the true payoff. Note that indefinite survival has $0$ probability due to the finiteness of $\tau$, hence a jump in prices at time $\tau$ occur with probability $1$.

The outline of the paper is as follows: Section \ref{s:setup} introduces the model, defines the admissible controls for the market makers as well as the informed trader, and ends with a characterisation of the market makers' pricing choice.  The equilibrium when the final payoff has a  Bernoulli and a general distribution are solved in Sections \ref{s:Bernoulli} and \ref{s:general}, respectively. Finally, Section \ref{s:conc} concludes.

\section{The setup} \label{s:setup}
Let $(\Omega , \cF , (\cF_t)_{t \geq 0} , \bbP)$ be a filtered probability space  satisfying the usual conditions of right continuity and $\bbP$-completeness. We suppose that $\cF_0$ is not trivial and there exists an $\cF_0$-measurable random variable, $\Gamma$, taking values in $\bbR$. Moreover, the filtered probability space also supports a standard Brownian motion, $B$, with $B_0=0$ and, thus, $B$ is independent of $\Gamma$. We also assume the existence of an $\cF$-measurable random variable, $\tau$, which is independent of $\cF_{\infty}$ and has exponential distribution with mean $0<r^{-1}<\infty$. In particular, $\tau$ is independent of $\Gamma$ as well as $B$.

The  measure induced by $\Gamma$ on $\bbR$ will be denoted by $\nu$, i.e. $\nu(A):=\bbP(\Gamma \in A)$ for any Borel subset of $\bbR$.   We further assume the existence of a family of probability measures, $(\bbP^v)_{v \in \bbR}$, such that the following disintegration formula holds:
\be \label{e:disintegration}
\bbP(E)=\int_{\bbR}\bbP^v(E)\nu(dv), \qquad \forall E \in \cF.
\ee
The existence of such a family is easily justified when we consider $\Om =\bbR \times C(\bbR_+,\bbR)$, where $C(\bbR_+,\bbR)$ is the space of real valued continuous functions on $\bbR_+$. We set $\bbP^v =\bbP$ if $v \notin \mbox{supp}(\nu)$. We also assume that $\Gamma$ is square integrable. That is,
\be \label{a:GamInt}
 \bbE[\Gamma]=\int_\bbR v^2\nu(dv)<\infty.
\ee

We consider a market in which the risk free interest rate is set to $0$ and a single risky asset is traded. The fundamental value of this asset equals $\Gamma$, which  will be  announced at the random time $\tau$. 

There are three types of agents that interact in this market:
\begin{itemize}
\item[i)] Liquidity traders who trade for reasons exogenous to the model and whose cumulative demand at time $t$ is given by  $B_t$.
\item[ii)] A single informed trader,  who knows $\Gamma$ from time $t=0$ onwards, and is risk neutral. We will call the informed trader {\em insider}  in what follows and denote her cumulative demand at time $t$ by $\theta_t$. The filtration of the insider, $\cG^I$, is generated by observing the price of the risky asset, $\Gamma$, and whether the announcement has been made, that is, whether $\tau>t$ or not for each $t\geq 0$. The filtration will also be assumed to be completed with the null sets of $(\bbP^v)_{v \in \bbR}$. 
\item[iii)] Market makers observe only the net demand of the risky asset, $X=B+\theta$,  whether $\tau>t$ or not for each $t\geq 0$, and $\Gamma$ when it is made public at $\tau$. Thus, their filtration, $\cG^M$, is the minimal right-continuous filtration generated by $X$, $(\chf_{[\tau>t]})_{t \geq 0}$, and $(\chf_{[t \geq \tau]}\Gamma)_{t \geq 0}$, and  completed with the  $\bbP$-null sets. In particular, $\tau$ is a $\cG^M$-stopping time. The price process chosen by the market makers is denoted by $S$. Obviously, $S_t =\Gamma$ on the set $[t \geq \tau]$. Prior to the announcement date we assume that the market price is determined according to a {\em Bertrand  competition:} Market makers make their price offers and the investors trade with the one offering the best quote. This mechanism results in $S$ being a $\cG^M$-martingale, i.e. $S_t=\bbE[\Gamma|\cG^M_s]$. Consequently, $S_{\infty}$ exists and equals $\Gamma$ since $\bbP(\tau<\infty)=1$.  
\end{itemize}

The way that the price process is determined yields the following since $\Gamma$ is integrable.
\begin{proposition} \label{p:SUI} In view of the condition (\ref{a:GamInt}), for any $s \geq 0$
\be \label{e:S-Glimit}
\lim_{t \rar \infty}\bbE[|S_t-\Gamma||\cG^M_s]=0.
\ee
\end{proposition}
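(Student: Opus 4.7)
The plan is to reduce the statement to an application of conditional dominated convergence once two ingredients are in hand: almost sure convergence $S_t \to \Gamma$, and an integrable envelope for $|S_t - \Gamma|$.

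First, I would record that $S$ is a uniformly integrable $\cG^M$-martingale. This is immediate from the defining property $S_t=\bbE[\Gamma|\cG^M_t]$ together with $\Gamma \in L^1$ (which follows from \eqref{a:GamInt}). Martingale convergence then yields $S_t \to S_\infty$ almost surely, where $S_\infty=\bbE[\Gamma|\cG^M_\infty]$. Since $(\chf_{[t\geq\tau]}\Gamma)_{t\geq 0}$ is $\cG^M$-adapted and $\bbP(\tau<\infty)=1$, the variable $\Gamma$ is $\cG^M_\infty$-measurable, so $S_\infty=\Gamma$; equivalently, on each sample path $S_t$ eventually coincides with $\Gamma$ once $t \geq \tau(\omega)$. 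Therefore $|S_t-\Gamma|\to 0$ almost surely.

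Next I would produce an integrable dominating random variable. Writing $S^\ast:=\sup_{t\geq 0}|S_t|$, Doob's $L^2$ maximal inequality applied to the $\cG^M$-martingale $S$ gives $\bbE[(S^\ast)^2] \leq 4\,\bbE[\Gamma^2]<\infty$ thanks to the square integrability assumption \eqref{a:GamInt}. Hence $S^\ast + |\Gamma| \in L^1$ and dominates $|S_t-\Gamma|$ uniformly in $t$.

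Finally, I would invoke the conditional dominated convergence theorem with respect to $\cG^M_s$: combining the almost sure convergence $|S_t-\Gamma|\to 0$ with the integrable envelope $S^\ast+|\Gamma|$ yields $\bbE[|S_t-\Gamma| \mid \cG^M_s] \to 0$ almost surely, which is \eqref{e:S-Glimit}. I do not anticipate a real obstacle here: the only place assumption \eqref{a:GamInt} is actually used is in the Doob inequality step to obtain the $L^1$ envelope; if one is willing to work with $L^1$ convergence instead of almost sure convergence the square integrability is not even needed, since $L^1$ convergence of $S_t$ to $\Gamma$ is already built into the uniform integrability of $S$.
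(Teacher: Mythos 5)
Your proof is correct, and it actually takes a somewhat different route from the paper's. The paper deduces uniform integrability of $(S_t)_{t\ge 0}$ (citing Problem 1.3.20 of Karatzas--Shreve) and hence of $(\Gamma-S_t)_{t\ge 0}$, and then concludes from $|S_t-\Gamma|\to 0$ almost surely. Strictly speaking, uniform integrability plus almost sure convergence yields only $L^1$ convergence of $|S_t-\Gamma|$, and therefore $L^1$ convergence of the conditional expectations $\bbE[|S_t-\Gamma|\,|\,\cG^M_s]$ --- not automatically pointwise a.s. convergence (a travelling-bump construction shows that UI plus a.s. convergence does \emph{not} imply a.s. convergence of conditional expectations in general). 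Your argument, by contrast, builds an honest integrable envelope $S^\ast+|\Gamma|$ via Doob's $L^2$ maximal inequality (this is where the square-integrability \eqref{a:GamInt} is genuinely used) and then invokes the conditional dominated convergence theorem, which delivers a.s. convergence directly. So your proof is, if anything, more careful than the one in the paper. Your closing remark is also on target: if one is content with $L^1$ convergence of the conditional expectations, then the UI of the closed martingale $S_t=\bbE[\Gamma|\cG^M_t]$ already suffices and $\Gamma\in L^1$ is enough; and indeed in the application of this proposition in Appendix A the inequality is ultimately integrated, so $L^1$ convergence would serve. The only small thing to flag explicitly is the standing assumption that $S$ has c\`adl\`ag paths, which is needed for $S^\ast$ to be measurable and for Doob's inequality; this is standard for a right-continuous filtration and an integrable $\Gamma$, and is implicitly assumed throughout the paper.
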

\begin{proof}
Since $\lim_{t \rar \infty}S_t =\Gamma$ and $\Gamma$ satisfies (\ref{a:GamInt}), we deduce that the family $(S_t)_{t \geq 0}$ is uniformly integrable (see Problem 1.3.20 in \cite{KS}) and so is $(\Gamma -S_t)_{t \geq 0}$. Since $\lim_{t \rar \infty}|S_t-\Gamma|=0$, the claim follows. 
\end{proof}

In this paper, as in all other past work on Kyle's model, we are interested in Markovian Nash equilibria. In line with the current literature, we shall assume that the price chosen by the market makers is a deterministic function of some process, $Y$, which solves $dY_t=w(t,Y_t)dX_t +b(t,Y_t)dt$ for some weighting function, $w$, and drift function, $b$, chosen by the market makers (see, e.g., \cite{B, BP, GBP} among others for the use of a weighting function in the construction of market makers' signal). 

Before defining what we mean precisely by an equilibrium in this model we introduce the class of admissible  controls for the  market makers and the informed trader. 
\begin{definition} The pair $\left((w, b, y),  h\right)$ is an admissible pricing rule for the market makers if for some interval $(l,u)\subset \bbR$, $y \in (l,u)$, and $w:\bbR_+\times (l,u)\mapsto (0,\infty)$, $b:\bbR_+\times (l,u)\mapsto \bbR$, and  $h:\bbR_+\times (l,u)\mapsto \bbR$ are measurable functions such that $h(t,\cdot)$ is strictly increasing for every $t >0$, $w$ is bounded away from $0$ on compact subsets of $\bbR_+\times(l,u)$, and  for any Brownian motion, $\beta$,  there exists a unique strong solution without explosion\footnote{That is, the boundary points $l$ and $u$ are not to be reached in finite time.} to 
\[
Y_t= z+ \int_0^t w(s,Y_s) d\beta _s + \int_0^t b(s,Y_s) ds, \qquad  \forall z \in (l,u).
\]
\end{definition}

Given an admissible pricing rule, $\left((w,b, y), h\right)$, the market makers will set the price on $[\tau>t]$ to be $h(t,Y_t)$, where $Y$ follows
\be \label{e:Yadmissible}
Y_t= y+ \int_0^t w(s,Y_s)\left\{dB_s +d\theta_s\right\} + \int_0^t b(s,Y_s)ds,
\ee
whenever $\theta$ is an admissible strategy for the insider.

Since $h$ is strictly monotone, $Y$ is perfectly observable by the traders, in particular by the insider.  In conjunction with the assumption that $w$ is bounded away from $0$ on compact subsets of $\bbR_+\times (l,u)$ this will entail that  the insider can observe $B$ since she clearly knows  her own strategy, $\theta$. This assumption, thus, also ensures that the insider's filtration is well-defined and is generated by $B, \Gamma, $ and $(\tau \wedge t)_{t\geq 0}$. This is worth noting since, otherwise, we may run into problems with the well-posedness of the model as the insider's trading strategy may depend on the observations of the price process that  is adapted to the filtration generated by $X$, which depends crucially on the insider's trading strategy. 
\begin{definition} Given an admissible pricing rule  $\left((w,b, y), h\right)$,    $\theta$ is an admissible strategy for the insider if $\theta$ is of finite variation and the following conditions are satisfied.
\begin{enumerate} 
\item $\theta$ is absolutely continuous:
\[
\theta_t=\int_0^t \alpha_s ds,
\]
for some  $\cG^I$-adapted $\alpha$, where $\cG^I$ is the minimal right continuous filtration generated by $B,\Gamma$ and  $(\tau \wedge t)_{t\geq 0}$, and completed with the  $(\bbP^v)_{v \in \bbR}$-null sets.
\item There exists a unique strong solution to 
\[
Y_t=  z+ \int_0^t w(s,Y_s)dX_s + \int_0^t b(s,Y_s)ds,\qquad \forall z \in (l,u).
\]
\item The following integrability condition holds:
\be \label{adm:alpha}
\bbE^v\int_0^{\tau} h^2(t,Y_t)dt<\infty, \qquad \forall v \in \mbox{supp}(\nu),
\ee
where $Y$ is the unique strong solution of (\ref{e:Yadmissible}).
\end{enumerate}
\end{definition}

The assumption that the strategy is absolutely continuous is without any loss of generality since any strategy with positive quadratic variation is necessarily suboptimal due to the price impact of the trades (see \cite{B} for a proof of this fact). 

Faced with an admissible pricing rule  $\left((w,b, y),  h\right)$ the insider employs an admissible strategy, $\theta$, and achieves 
\[
W_{\tau}=\int_{[0,\tau]} \theta_s dh(s,Y_s) + \theta_{\tau} (\Gamma-h(\tau,Y_{\tau})),
\]
as its total profit when the public announcement is made and the trading possibilities end, where $Y$ is the strong solution to (\ref{e:Yadmissible}), which exists and is unique since $\theta$ is admissible. Note that the term $\theta_{\tau} (\Gamma-h(\tau,Y_{\tau}))$ is due to a potential jump in the price when the true value is revealed. Integrating by parts we obtain a more convenient representation:
\[
W_{\tau}=\int_0^{\tau}(\Gamma - h(s,Y_s))\alpha_s ds.
\]
Being risk-neutral the informed trader's goal is to maximise $\bbE^v W_{\tau}$ within the class of admissible strategies. 
\begin{definition} $\left((w^*,b^*,y^*),h^*,\alpha^*\right)$ is an equilibrium if 
\begin{enumerate}
\item $\left((w^*,b^*,y^*),h^*\right)$ is an admissible pricing rule for the market makers;
\item $\theta^*$ defined by $\theta_t^*=\int_0^t \alpha_s^* ds$ is an admissible trading strategy for the insider given $\left((w^*,b^*,y^*),h^*\right)$;
\item $S_t=\chf_{[\tau>t]}h^*(t,Y^*_t)+ \chf_{[\tau\leq t]}\Gamma$ is a $\cG^M$-martingale, where $Y^*$ is the unique strong solution of (\ref{e:Yadmissible}) with $w=w^*$ and $b=b^*$;
\item $\theta^*$ maximises the expected profits for the insider. That is,
\[
\bbE^v \int_0^{\tau}(v - h^*(s,Y^*_s))\alpha_s^* ds = \sup_{\alpha \in \cA}\bbE^v\int_0^{\tau}(v - h^*(s,Y^*_s))\alpha_s ds, \qquad \forall v \in \mbox{supp}(\nu),
\] 
where $\cA$ is the class of all admissible strategies given the admissible pricing rule $\left((w^*,b^*,y^*),h^*\right)$.
\end{enumerate}
\end{definition}
Since $\tau$ is independent of $\Gamma$ and $B$, the market makers as well as the insider will not use controls that depends on $\tau$ in the equilibrium. In this case the expected value of the  final wealth of the insider will equal
\be \label{e:evalins}
\bbE^v[W_{\tau}]=\int_0^{\infty}e^{-rs}\bbE^v\left[(v - h(s,Y_s))\alpha_s \right]ds.
\ee

 Let us denote by  $\cF^X$  the minimal right continuous filtration containing the $\bbP$-null sets and with respect to which $X$ is adapted.  When the insider uses strategies that are independent of $\tau$, this will render $X$  independent of $\tau$ as well. In this case one should expect that
 \[
 S_t= P_t \chf_{[t<\tau]} + \Gamma \chf_{[t \geq \tau]},
 \]
 where $P$ is a semimartingale adapted to $\cF^X$. The following proposition shows that this is indeed the case and gives a characterisation of $P$ in terms of $\Gamma$. 
 \begin{proposition} \label{p:PV} Suppose that $X$ is independent of $\tau$ and all $\cF^X$-martingales are continuous. Then, $P_t=\bbE[\Gamma|\cF^X_t]$, i.e. $P$ is the $\cF^X$-optional projection of $\Gamma$. In particular, $P$ is continuous.
 \end{proposition}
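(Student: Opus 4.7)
My plan is to identify the $\cG^M$-martingale $S_t = \bbE[\Gamma \mid \cG^M_t]$ explicitly by splitting on the complementary events $[\tau \leq t]$ and $[\tau > t]$; this will simultaneously produce the formula $P_t = \bbE[\Gamma \mid \cF^X_t]$ and the continuity claim. The underlying idea is that on the pre-announcement set $[\tau > t]$, the extra information the market makers have over $\cF^X_t$ is essentially the atom $[\tau > t]$ itself, which reveals nothing about $\Gamma$ because $\tau$ is jointly independent of $(X,\Gamma)$.

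First, on $[\tau \leq t]$ the random variable $\Gamma$ is $\cG^M_t$-measurable by the very definition of $\cG^M$, so $\bbE[\Gamma \mid \cG^M_t]\chf_{[\tau \leq t]} = \Gamma \chf_{[\tau \leq t]}$, which matches the second summand in the postulated form of $S$. For the main case, I would note that on $[\tau > t]$ the generators $(\chf_{[\tau>s]})_{s\leq t}$ and $(\chf_{[\tau \leq s]}\Gamma)_{s\leq t}$ of the non-$X$ part of $\cG^M_t$ collapse to the constants $1$ and $0$, so, up to null sets, $\cG^M_t$ restricted to $[\tau > t]$ is generated by $\cF^X_t$ together with the single atom $[\tau > t]$. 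Since $(X,\Gamma)$ is $\cF_\infty$-measurable and $\tau \perp \cF_\infty$ by the setup, one has $\tau \perp (X,\Gamma)$, sharpening the stated hypothesis into joint independence. For any $A \in \cF^X_t$ I then compute
\[
\bbE[\Gamma \chf_A \chf_{[\tau > t]}] = \bbP(\tau > t)\,\bbE[\Gamma\chf_A] = \bbP(\tau>t)\,\bbE[\bbE[\Gamma \mid \cF^X_t]\chf_A] = \bbE\bigl[\bbE[\Gamma \mid \cF^X_t]\chf_A\chf_{[\tau>t]}\bigr],
\]
and a monotone class argument extends this identity to every $\cG^M_t$-measurable subset of $[\tau > t]$, yielding $\bbE[\Gamma \mid \cG^M_t]\chf_{[\tau > t]} = \bbE[\Gamma \mid \cF^X_t]\chf_{[\tau > t]}$. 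Since $P$ is required to be $\cF^X$-adapted, combining the two cases forces $P_t = \bbE[\Gamma \mid \cF^X_t]$ up to indistinguishability.

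For continuity, the resulting $(\bbE[\Gamma \mid \cF^X_t])_{t\geq 0}$ is a uniformly integrable $\cF^X$-martingale by (\ref{a:GamInt}) and Jensen's inequality, so the standing hypothesis that every $\cF^X$-martingale is continuous delivers a continuous modification. The only mildly subtle point is the monotone class passage promoting the single-test-set identity into an a.s.\ equality of conditional expectations on $[\tau > t]$; everything else is bookkeeping around the independence of $\tau$ from $(X,\Gamma)$.
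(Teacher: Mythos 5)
Your argument reaches the correct conclusion, but by a genuinely different route from the paper's. The paper works through the constraint that $S$ is a $\cG^M$-martingale using It\^o calculus: after proving in Lemma~\ref{l:pen} that $N_t=\Gamma\chf_{[t\geq\tau]}-r\int_0^t\chf_{[s<\tau]}\bbE[\Gamma|\cF^X_s]\,ds$ is a $\cG^M$-martingale, it writes $P=P_0+Z+A$ with $Z$ a continuous local martingale and $A$ of finite variation, forces $dA_t=r(P_t-\bbE[\Gamma|\cF^X_t])\,dt$ from the vanishing of the predictable finite-variation part in the $S$-decomposition, and then uses Tanaka's formula and Gronwall's inequality to get $\bbE\bigl[\,|P_t-\bbE[\Gamma|\cF^X_t]|\,\big|\,\cF^X_s\bigr]\geq|P_s-\bbE[\Gamma|\cF^X_s]|\,e^{r(t-s)}$, which contradicts Proposition~\ref{p:SUI} unless $P$ coincides with the optional projection of $\Gamma$. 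You instead compute $\bbE[\Gamma|\cG^M_t]$ directly on the pre-announcement set via the $\sigma$-field structure of the progressive enlargement; this is considerably shorter, exposes the optional-projection identity immediately, and makes the role of the independence hypothesis transparent without invoking stochastic calculus or local time. The trade the paper makes is the reverse: it avoids any explicit analysis of the filtration $\cG^M_t$, at the cost of the Tanaka/Gronwall machinery. The one place where your write-up glosses over a genuine technicality is the assertion that $\cG^M_t$ restricted to $[\tau>t]$ reduces to $\cF^X_t$ together with the atom $[\tau>t]$. Since $\cG^M_t$ is the right-continuous completed filtration, it is $\bigcap_{u>t}$ of the completed $\sigma$-fields generated by $X$, $\chf_{[\tau>\cdot]}$, and $\chf_{[\cdot\geq\tau]}\Gamma$ up to time $u$, and for $u>t$ the events $[\tau\leq u]$ and the mark $\Gamma\chf_{[\tau\leq u]}$ are \emph{not} constant on $[\tau>t]$; so the reduction does not follow merely from inspecting the generators at times $\leq t$ as you suggest, and the monotone-class step you invoke has nothing to attach itself to yet. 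This is repairable: either cite the standard pre-$\tau$ restriction lemma from progressive-enlargement theory (which applies since $\bbP(\tau>t)>0$ for all $t$), or argue directly that for $u>t$, $\bbE\bigl[\Gamma\chf_{[\tau>t]}\,\big|\,\sigma(X_s,\chf_{[\tau>s]},\chf_{[s\geq\tau]}\Gamma:s\leq u)\bigr]=\chf_{[t<\tau\leq u]}\Gamma+\chf_{[\tau>u]}\bbE[\Gamma\,|\,\sigma(X_s:s\leq u)]$ by the joint independence of $\tau$ and $(X,\Gamma)$, and then let $u\downarrow t$ using reverse martingale convergence together with $\bbP(\tau=t)=0$. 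With that step made explicit, your proof is sound and notably more economical than the paper's.
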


The proof of the above  result is delegated to  Appendix \ref{a:p}. However, it is worth to mention here that the proof does not rely on market makers making their pricing decision in a Markovian manner, i.e. $P_t=h(t,Y_t)$, where $Y$ follows (\ref{e:Yadmissible}). That is, whatever the pricing decision is, it must satisfy $P_t=\bbE[\Gamma|\cF^X_t]$ whenever $X$ is independent of $\tau$ and all $\cF^X$-martingale are continuous\footnote{As we shall see in the next section all $\cF^X$-martingales are continuous in the equilibrium. In general a sufficient conditions for this property is $\bbE\int_0^t\alpha_s^2ds<\infty$ for all $t>0$ (see Corollary 8.10 of \cite{RW1}).}.
 
In view of the above characterisation of the pricing decision, $P$, of the market makers, we will search for an equilibrium in the next section by studying the optimal trading choices of the insider. As our focus is not on the uniqueness of the equilibrium, we shall follow our intuition and consider trading strategies that are independent of $\tau$.  Accordingly, the following section will establish the existence of an equilibrium, where $X$ is independent of $\tau$. 
\section{Insider's optimisation problem and the equilibrium}
\subsection{Bernoulli distributed liquidation value} \label{s:Bernoulli} In this section we assume that $\Gamma$ takes values in $\{0,1\}$ as in \cite{BB} and set $p:=\nu(\{1\})$, i.e the probability that the liquidation value  equals $1$.

Under this assumption we shall next see that one can obtain an equilibrium where the coefficients in (\ref{e:Yadmissible}) as well as the pricing function, $h$, do not depend on time. Consequently,   the solution of the optimisation problem for the insider will be time-homogeneous as well. 

We first try to formally obtain the Bellman equations associated to the value function of the insider. To this end, as observed above, suppose that $X$ is independent of $\tau$  and the market makers set the price at time $t$ on $[t<\tau]$ to be $h(Y_t)$ for some sufficiently smooth $h$, where
\[
dY_t= a(Y_t)dX_t + \phi(Y_t)dt.
\]
Also recall that $dX_t=dB_t+\alpha_t dt$, where $\alpha_t dt$ represents the infinitesimal trades of the insider. 

In view of (\ref{e:evalins}) we may consider the following value function for the insider's problem given that $\Gamma=v$:
\be \label{e:Jber}
J(x)=\sup_{\alpha}\bbE^v\left[\int_0^{\infty}e^{-rs}(v-h(Y_s))\alpha_s ds\bigg|Y_0=x\right].
\ee
Formally, $J$ solves
\[
\sup_{\alpha}\left\{\frac{1}{2}a^2 J'' + J' (a \alpha +\phi)+  \alpha (v-h) -rJ\right\}=0.
\]
Thus, if it is three times continuously differentiable, we expect to have
\bea
a J' &=& h-v \label{e:J'}\\
\frac{1}{2}a^2 J'' + J' \phi -rJ&=&0. \label{e:AJ}
\eea
The first identity yields
\[
J'=\frac{h-v}{a}, \; J''=\frac{h'}{a}- \frac{h-v}{a^2}a ' , \mbox{ and } J'''=\frac{h''}{a}-2\frac{h' a'}{a^2}-\frac{(h-v)a''}{a^2}+2\frac{(h-v)(a')^2}{a^3}.
\]
Plugging the above into the second identity yields
\[
0= \frac{a^2}{2}h''+h' \phi -(h-v)\left[\frac{a a''}{2}+\frac{a'}{a}\phi-\phi' + r\right].
\]
Since $h$ cannot depend on $\Gamma$, we obtain the following conditions for the candidate pricing rule $h$ and the coefficients $a$ and $\phi$ that will be chosen by the market makers to construct the price.
\bea
\frac{1}{2}a^2 h'' + h' \phi&=& 0 \label{e:hrule} \\
\frac{1}{2}a^2a'' +a'\phi +(r-\phi')a&=& 0 \label{e:phia}.
\eea

Looking at this equation we may guess that in the equilibrium 
\be \label{e:sdeY}
dY_t= a(Y_t)dB^Y_t + \phi(Y_t)dt,
\ee
where $B^Y$ is a $\cG^M$-Brownian motion, and  $a$ and $\phi$ solve (\ref{e:phia}). The following shows that all such processes can be obtained from an Ornstein-Uhlenbeck process.
\begin{proposition} \label{p:YOU} Suppose that $Y$ is a regular one-dimensional diffusion on $(l,u)$ defined by the generator
\[
\frac{1}{2}a^2 \frac{d^2}{dx^2}+\phi \frac{d}{dx},
\]
where $a> 0$ and $\phi$ are two functions satisfying (\ref{e:phia}) on $(l,u)$. Assume further that for any $y \in (l,u)$ there exists a unique weak  solution of (\ref{e:sdeY}) and the following integrability condition holds for some $c \in (l,u)$
\[
f(x):=\int_c^x \frac{1}{a(y)} <\infty, \, \forall x \in (l,u).
\]
If $R_t= f(Y_t)$, then  $R$ is an Ornstein-Uhlenbeck process with the generator
\be \label{e:OUgen}
A=\frac{1}{2} \frac{d^2}{dx^2}+ (rx+d) \frac{d}{dx},
\ee
for some $d \in \bbR$.
\end{proposition}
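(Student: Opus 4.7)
The plan is to apply It\^o's formula to $R_t = f(Y_t)$ and then read off the resulting SDE as that of an Ornstein--Uhlenbeck process by invoking the compatibility condition (\ref{e:phia}). The function $f$ is strictly increasing since $a>0$, is $C^2$ on $(l,u)$ with $f'(x)=1/a(x)$ and $f''(x)=-a'(x)/a(x)^2$, and maps $(l,u)$ onto an open interval $f((l,u))\subset \bbR$.

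Applying It\^o's formula to (\ref{e:sdeY}) gives
\be
dR_t = f'(Y_t)\,dY_t + \tfrac12 f''(Y_t)\,a(Y_t)^2\,dt = dB^Y_t + \mu(Y_t)\,dt,
\ee
where $\mu(y):=\phi(y)/a(y)-\tfrac12 a'(y)$. The diffusion coefficient is already $1$, so it suffices to show that $\mu(y) = rf(y)+d$ for some constant $d$, because then $dR_t = dB^Y_t + (rR_t+d)\,dt$, which is precisely the SDE whose generator is (\ref{e:OUgen}).

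For this I would verify that $\mu'(y) = r/a(y) = rf'(y)$ on $(l,u)$, so that $\mu - rf$ is constant. A direct differentiation yields
\be
\mu'(y) = \frac{\phi'(y)a(y)-\phi(y)a'(y)}{a(y)^2} - \tfrac12 a''(y),
\ee
and after multiplying through by $2a(y)^2$, the identity $\mu'(y) = r/a(y)$ is seen to be equivalent to $\tfrac12 a^2 a'' + a'\phi + (r-\phi')a = 0$, which is exactly (\ref{e:phia}). Hence $\mu = rf+d$ with $d := \mu(c)$, completing the identification of the drift.

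The genuine content of the argument is thus the algebraic translation between the ODE (\ref{e:phia}) and the requirement $a(y)\mu'(y)=r$; everything else is a mechanical application of It\^o. The only point requiring a brief remark is that $f$ is a $C^2$ diffeomorphism of $(l,u)$ onto $f((l,u))$ and $Y$ does not explode by hypothesis, so neither does $R$, which together with the SDE derived above pins down the generator of $R$ on $C^2_c(f((l,u)))$ as the operator $A$ in (\ref{e:OUgen}). I anticipate no real obstacle beyond this short computation.
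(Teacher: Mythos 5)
Your argument is correct and follows essentially the same route as the paper: apply It\^o's formula to $R_t=f(Y_t)$, note the diffusion coefficient becomes $1$, and then use (\ref{e:phia}) to show that the drift $\mu = \phi/a - \tfrac12 a'$ satisfies $\mu' = r f'$, hence $\mu = rf + d$. The paper phrases the last step as showing that the derivative of $\mu\circ f^{-1}$ equals $r$, which is the same computation after a chain rule.
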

\begin{proof}
Clearly, $Y$ solves (\ref{e:sdeY}) with some Brownian motion $B^Y$. Thus, it follows from Ito's formula that
\[
dR_t= dB^Y_t + \left\{\frac{\phi(f^{-1}(R_t))}{\sigma(f^{-1}(R_t))}-\frac{1}{2}\sigma'(f^{-1}(R_t))\right\}dt.
\]
Using (\ref{e:phia}) one can easily check that the derivative of the function
\[
y \mapsto \frac{\phi(f^{-1}(y))}{\sigma(f^{-1}(y))}-\frac{1}{2}\sigma'(f^{-1}(y))
\]
equals $r$, which yields the claim.
\end{proof}

In view of the above proposition we expect the equilibrium price process to be a function of an Ornstein-Uhlenbeck process. Thus, there is no harm in choosing $a\equiv 1$ and $\phi=rx+d$ for some $d \in \bbR$. This means that the pricing rule, $h$, will be a scale function of the diffusion with the generator (\ref{e:OUgen}). Recall that the choice of $a$ and $\phi$ are made by the market makers. Thus, the market makers  will construct their signal, $Y$, such that it is a transient process in the equilibrium ($|Y_t|\rar \infty$ when $Y$ is defined by (\ref{e:OUgen})).

\begin{lemma}
Define
\be \label{e:OUscale}
s(x)=\sqrt{\frac{r}{\pi}}\int_{-\infty}^x \exp\left(-r\left(y+\frac{d}{r}\right)^2\right)dy.
\ee
Then $s$ is a scale function for the diffusion defined by (\ref{e:OUgen}) with $s(-\infty)=1-s(\infty)=0$.
\end{lemma}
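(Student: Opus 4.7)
The plan is to verify the defining ODE for a scale function by direct differentiation, then compute the two limits via the Gaussian integral.

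First I would apply the fundamental theorem of calculus to get $s'(x) = \sqrt{r/\pi}\,\exp(-r(x+d/r)^2)$. Differentiating once more and pulling out the common exponential gives $s''(x) = -2r(x+d/r)\,s'(x) = -2(rx+d)\,s'(x)$. Substituting into the generator \eqref{e:OUgen} yields
\[
\tfrac{1}{2}s''(x) + (rx+d)s'(x) = -(rx+d)s'(x) + (rx+d)s'(x) = 0,
\]
so $s$ is annihilated by $A$. Since $s'(x)>0$ for every $x \in \bbR$, the function $s$ is strictly increasing and therefore qualifies as a scale function of the diffusion generated by $A$.

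For the boundary behaviour, $s(-\infty)=0$ is immediate from the definition. For $s(\infty)$ I would perform the substitution $u = \sqrt{r}(y + d/r)$, which gives
\[
s(\infty) = \sqrt{\tfrac{r}{\pi}}\int_{-\infty}^{\infty}\exp\!\bigl(-r(y+d/r)^2\bigr)\,dy = \tfrac{1}{\sqrt{\pi}}\int_{-\infty}^{\infty} e^{-u^2}\,du = 1,
\]
so $1-s(\infty)=0$, as claimed.

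The argument is entirely routine; there is no real obstacle beyond checking that the algebra linking $s''$ to $(rx+d)s'$ matches the drift coefficient, which is guaranteed by the exponent in the integrand being $-r(y+d/r)^2$ rather than any other Gaussian profile.
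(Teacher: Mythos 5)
Your proof is correct and follows essentially the same route as the paper: verify $As=0$ by direct differentiation and obtain the boundary values by recognising $s$ as a Gaussian cumulative distribution function. You merely spell out the "straightforward" computation that the paper leaves to the reader.
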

\begin{proof}
It is straightforward to check that $A s=0$. Moreover, $s$ is the cumulative distribution function for a normal random variable implying the boundary conditions.
\end{proof}

\begin{theorem} \label{t:optY} Let $E\in \cF_0$ and consider the SDE
\[
dY_t= dB_t +\left\{rY_t +d +\chf_E \frac{s'(Y_t)}{s(Y_t)} -\chf_{E^c}\frac{s'(Y_t)}{1-s(Y_t)}\right\}dt, \; Y_0=y \in \bbR,
\]
where $s$ is as given by (\ref{e:OUscale}).  Then, there exists a unique strong solution to the above. The solution, in particular,  satisfies 
\[
\lim_{t \rar \infty} Y_t(\om)=\left\{\ba{rl} \infty, &\mbox{if } \om \in E; \\
-\infty, & \mbox{if } \om \in E^c. \ea\right.
\]
Moreover, if $\bbP(E)=s(y)$, we have
\[
\bbP(E|\cF^Y_t)=s(Y_t),
\]
and, consequently,
\[
dY_t= dB^Y_t +\left\{rY_t +d \right\}dt
\]
in its own filtration, where $B^Y$ is a Brownian motion.
\end{theorem}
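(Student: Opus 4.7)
The plan is to interpret the stated SDE as a mixture, indexed by the $\cF_0$-measurable event $E$, of two Doob $h$-transforms of the OU diffusion with generator (\ref{e:OUgen}): on $E$ the $h$-function would be the scale function $s$; on $E^c$ it would be $1-s$. Both are strictly positive and $A$-harmonic, so by Doob's theorem each transformed diffusion is the OU process conditioned on $Y_\infty=+\infty$ (resp.\ $-\infty$). This single observation is meant to drive the whole argument.

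\emph{Strong existence, uniqueness, and asymptotics.} Since $E\in\cF_0$, the SDE decouples into two problems, one on $E$ and one on $E^c$, which I would handle separately. The drifts are smooth on $\bbR$; the function $s$ being a Gaussian c.d.f., Mills-ratio asymptotics give $s'(x)/s(x)\sim -2r(x+d/r)$ as $x\rar-\infty$ and $s'(x)/s(x)\rar 0$ as $x\rar+\infty$. Hence the drift on $E$ behaves like $-rx$ near $-\infty$ (restoring) and like $rx$ near $+\infty$ (repelling). Feller's boundary test then rules out explosion at either endpoint and standard SDE theory delivers a unique strong solution. The problem on $E^c$ is symmetric. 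The almost sure limits $Y_t\rar\pm\infty$ would follow either from the identification with the conditioned OU diffusion or, more directly, from classifying the boundary behaviour via the scale and speed of the transformed process.

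\emph{Filtering identity via Girsanov.} Let $\tilde\bbP$ be a reference measure under which $Y$ is a pure OU process started at $y$ driven by a Brownian motion $\tilde B$. Applied to the additional drift $s'/s$ on $E$, Girsanov's theorem gives
\[
Z_t:=\frac{d\bbP}{d\tilde\bbP}\bigg|_{\cF^Y_t}=\exp\left(\int_0^t\frac{s'(Y_u)}{s(Y_u)}d\tilde B_u-\frac{1}{2}\int_0^t\left(\frac{s'(Y_u)}{s(Y_u)}\right)^2 du\right).
\]
Applying Itô to $\log s(Y_u)$ under $\tilde\bbP$ and invoking $As=0$ would telescope this to $Z_t=s(Y_t)/s(y)$. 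Since $s\le 1$, this is a bounded positive local martingale, hence a genuine martingale; this is the place where a more pedestrian approach would stumble on Novikov-type integrability, and working with the specific harmonic function $s$ is precisely what makes the telescoping succeed. The analogous computation with $1-s$ in place of $s$ yields density $(1-s(Y_t))/(1-s(y))$ on $E^c$.

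\emph{Conclusion.} Under the hypothesis $\bbP(E)=s(y)$, summing the two contributions gives, for every $A\in\cF^Y_t$,
\[
\bbP(A)=\bbE^{\tilde\bbP}[\chf_A s(Y_t)]+\bbE^{\tilde\bbP}[\chf_A(1-s(Y_t))]=\tilde\bbP(A),
\]
so that $Y$ has pure OU law in its own filtration, which delivers the last displayed SDE. Since $\bbP(A\cap E)=\bbE^{\tilde\bbP}[\chf_A s(Y_t)]$ and $s(Y_t)$ is $\cF^Y_t$-measurable, this identifies $\bbP(E|\cF^Y_t)=s(Y_t)$. The main obstacle in the whole argument is the Girsanov step; once the density is seen to be $s(Y_t)/s(y)$, everything else is bookkeeping.
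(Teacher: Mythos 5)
Your proof is correct and follows essentially the same route as the paper: both identify the law of $Y$ as an $h$-transform of the pure Ornstein--Uhlenbeck law (by $s$ on $E$ and $1-s$ on $E^c$), and extract the filtering identity $\bbP(E\mid\cF^Y_t)=s(Y_t)$ and the OU dynamics in $\cF^Y$ from the resulting absolute-continuity relationship under the hypothesis $\bbP(E)=s(y)$. The differences are presentational rather than substantive: you compute the Radon--Nikodym density directly via Girsanov's telescoping identity and argue non-explosion by Mills-ratio asymptotics, whereas the paper changes measure by the bounded martingale $M=\chf_F\,s(R)/s(y)+\chf_{F^c}(1-s(R))/(1-s(y))$, obtains strong solvability via Yamada--Watanabe, and reads off the conditional probability from $E^y[M_\infty\chf_F\mid\cF^R_t]/E^y[M_\infty\mid\cF^R_t]$.
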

\begin{proof}
Consider an  Ornstein-Uhlenbeck process, $R$, with the generator (\ref{e:OUgen}) in some probability space $(\Om, \cG, (\cG_t)_{t \geq 0}, P)$ such that there exists a set $ F \in \cG_0$ with $P(F)=\bbP(E)$. Clearly, $M:=\chf_F \frac{s(R)}{s(y)} + \chf_{F^c}\frac{1-s(R)}{1-s(y)}$ is a bounded martingale with $E[M_0]=1$. Thus, defining $Q$ on $\cG$ by $\frac{dQ}{dP}=M_{\infty}$ yields the existence of a weak solution. Moreover, the weak solution is unique in law since $M_t>0$ for $t>0$. The limiting condition for $Y$ as $t \rar \infty$  follows from this construction of the weak solution since the construction is nothing but the $h$-transform that achieves $R_{\infty}=\infty$ (resp. $R_{\infty}=-\infty$) on $F$ (resp.  $F^c$) (see Paragraph 32 on p.34 of \cite{BorSal}). 

The SDE above in  fact  possesses a unique strong solution. Indeed,  since $\sigma \equiv 1$,  Lemma IX.3.3,  Corollary IX.3.4 and Proposition IX.3.2 in \cite{RY} imply that pathwise uniqueness holds for this SDE.  Thus, in view of the celebrated result of Yamada and Watanabe (see Corollary 5.3.23 in \cite{KS}), there exists a unique strong solution.

Moreover,  using the law of the solutions obtained via the weak construction performed above, we have
\[
\bbP[E|\cF^Y_t]=\frac{E^y\left[M_{\infty}\chf_{F}|\cF^R_t\right]}{E^y[M_{\infty}|\cF^R_t]},
\]
where $E^y$ is expectation with respect to the product measure $P^y\times \nu$, where $P^y$ is the law of $R$ with $R_0=y$ and $\nu$ is the distribution of $F$. Since $F$ and $R$ are independent, under the assumption that $\bbP(E)= s(y)$, we obtain
\[
\frac{E^y\left[M_{\infty}\chf_{F}|\cF^R_t\right]}{E^y[M_{\infty}|\cF^R_t]}=\frac{\frac{s(R_t)}{s(y)}P(F)}{E^y[M_t|\cF^R_t]}=s(R_t),
\]
since  $P(F)=\bbP(E)=s(y)$ and $E^y[M_t|\cF^R_t]= 1$. Thus, a standard result from non-linear filtering (e.g. Theorem 8.1 in \cite{LS}) yields
\[
dY_t= dB^Y_t +\left\{rY_t +d \right\}dt
\]
for some $\cF^Y$-Brownian motion.
\end{proof}
We now turn to the computation of insider's value function.  We will in fact  verify in Proposition \ref{p:insoptber} that it is given by
\be \label{e:J}
J(x):= \int_{s^{-1}(v)}^x (s(y)-v)dy,
\ee
which satisfies (\ref{e:J'}) by construction for $h=s$. Moreover,
\bea
A J(x) -rJ(x)&=&\half s'(x) +(s(x)-v))\phi(x)-r\int_{s^{-1}(v)}^x (s(y)-v)dy \nn \\
&=&\half s'(x) +\int_{s^{-1}(v)}^x\phi(y)s'(y)dy \nn \\
&=&\half s'(x) -\half \int_{s^{-1}(v)}^x s''(y)dy\nn \\
&=&0, \label{e:AJ-rJ}
\eea
since $s'(s^{-1}(v))=0$. 
\begin{lemma} \label{l:Jexcessive} For any $v \in \{0,1\}$ consider the SDE
\be \label{e:optY}
Y_t = y +B_t +\int_0^t \left\{rY_s +d +\chf_{v =1} \frac{s'(Y_t)}{s(Y_t)} -\chf_{v =0}\frac{s'(Y_t)}{1-s(Y_t)}\right\}ds,
\ee
and define 

\[
A^v:= A + \left\{\chf_{v =1} \frac{s'}{s} -\chf_{v =0}\frac{s'}{1-s}\right\}\frac{d}{dx}.
\]
Then, $J$ is an $r$-excessive function for $A^v$ and $\bbE^{Q_v} e^{-rt}J(Y_t)\rar 0$ as $t \rar \infty$, where $\bbE^{Q_v}$ is the expectation operator with respect to the law of the solution of the above SDE for the given value of $v$.
\end{lemma}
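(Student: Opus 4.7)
The plan is to prove the two claims separately: $r$-excessivity via It\^o's formula combined with the sign of $(A^v-r)J$, and the convergence $e^{-rt}\bbE^{Q_v}[J(Y_t)]\to 0$ via the $h$-transform construction of $Q_v$ recorded in the proof of Theorem \ref{t:optY}.

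For $r$-excessivity I would first compute $(A^v-r)J$ explicitly. From (\ref{e:AJ-rJ}) one already has $AJ-rJ\equiv 0$, and since $A^v=A+\bigl(\chf_{v=1}\frac{s'}{s}-\chf_{v=0}\frac{s'}{1-s}\bigr)\frac{d}{dx}$ while $J'=s-v$, a direct calculation yields
\[
(A^vJ-rJ)(x)=\chf_{v=1}\,\frac{s'(x)\bigl(s(x)-1\bigr)}{s(x)}-\chf_{v=0}\,\frac{s'(x)\,s(x)}{1-s(x)},
\]
which is non-positive in either case. Applying It\^o's formula to $e^{-rt}J(Y_t)$ one obtains
\[
e^{-rt}J(Y_t)=J(y)+\int_0^t e^{-rs}J'(Y_s)\,dB_s+\int_0^t e^{-rs}(A^vJ-rJ)(Y_s)\,ds,
\]
and since $|J'|=|s-v|\le 1$ is bounded, the stochastic integral becomes a genuine martingale once localised by $\tau_n:=\inf\{t\ge 0:|Y_t|\ge n\}$, which increases to $\infty$ by the non-explosion statement of Theorem \ref{t:optY}. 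Taking expectations up to $t\wedge\tau_n$ and then sending $n\to\infty$ (Fatou on the non-negative left-hand side, monotone convergence on the non-positive drift term), one gets $\bbE^{Q_v}[e^{-rt}J(Y_t)]\le J(y)$ for every $t\ge 0$. Together with continuity of $J$ and of the paths of $Y$, this is enough to conclude that $J$ is $r$-excessive for $A^v$.

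For the convergence at infinity I would use the change of measure underpinning the weak construction in the proof of Theorem \ref{t:optY}. For $v=1$, the restriction of $Q_v$ to $\cF^Y_t$ is absolutely continuous with respect to the law $P^y$ of the OU process $R$ with generator $A$, with density $s(R_t)/s(y)$; symmetrically, for $v=0$ the density is $(1-s(R_t))/(1-s(y))$. Hence
\[
\bbE^{Q_v}[e^{-rt}J(Y_t)]=\frac{e^{-rt}}{s(y)}\,\bbE^{P,y}\bigl[J(R_t)\,s(R_t)\bigr]\qquad(v=1),
\]
with the obvious analogue for $v=0$. The decisive observation is that the products $J\cdot s$ (for $v=1$) and $J\cdot(1-s)$ (for $v=0$) are bounded on $\bbR$: each is continuous, vanishes as $x\to+\infty$ (because $J(x)\to 0$ when $v=1$, and $1-s(x)\to 0$ exponentially when $v=0$), and vanishes as $x\to-\infty$ because the Gaussian decay of $s(x)$ (respectively $1-s(x)$) dominates the at most linear growth of $J$. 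Consequently the expectation on the right is uniformly bounded in $t$ and the prefactor $e^{-rt}$ drives the whole quantity to $0$.

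The main obstacle I anticipate is precisely the tail comparison ensuring $\|Js\|_\infty,\|J(1-s)\|_\infty<\infty$, which boils down to a Mills-ratio-type estimate showing that the Gaussian density in $s$ beats the linear growth $J(x)\sim|x|$ in the ``wrong'' direction at $\pm\infty$. The remaining pieces --- the It\^o/localisation argument and converting $\bbE^{Q_v}[e^{-rt}J(Y_t)]\le J(y)$ into the semigroup form of $r$-excessivity --- are routine once the sign of $(A^v-r)J$ has been pinned down.
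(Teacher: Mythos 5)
Your proof is correct, and for both halves it takes a genuinely different, more elementary route than the paper. The paper establishes $r$-excessivity by checking that the defect $g_v:=-(A^v-r)J\ge 0$ is integrable against the speed measure $m^v$, then invokes the Riesz-type integral representation $J=\int_0^\infty e^{-rs}Q^v_sg_v\,ds+c_1\psi_r+c_2\phi_r$ (Borodin--Salminen, no.\ 30, p.\ 33), argues $c_1=c_2=0$ from the boundary behaviour of $\phi_r,\psi_r$ at natural boundaries, and gets the convergence by showing the potential part $\int_t^\infty e^{-rs}Q^v_sg_v\,ds\to 0$ via a uniform bound on $Q^v_tg_v$ obtained through the $h$-transform. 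You instead derive $e^{-rt}Q^v_tJ\le J$ directly from It\^o's formula plus localisation and Fatou (noting $J'=s-v$ is bounded so the martingale part is genuine), and then bypass the Riesz decomposition entirely by pulling $\bbE^{Q_v}[e^{-rt}J(Y_t)]$ back to the OU base measure through the density $s(R_t)/s(y)$ (resp.\ $(1-s(R_t))/(1-s(y))$) and observing that $J\,s$ and $J(1-s)$ are uniformly bounded on $\bbR$ thanks to the Gaussian decay of $s'$ beating the linear growth of $J$ in the unfavourable direction. Your argument is shorter and avoids the potential-theoretic machinery, at the cost of the explicit Mills-ratio estimate on $J\,s$ and $J(1-s)$, which you correctly flag and which indeed holds; the paper's route additionally exhibits $J$ as a pure $r$-potential (no harmonic component), a structural fact your proof does not produce but which the lemma does not require.
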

\begin{proof}
Recall from (\ref{e:AJ-rJ}) that $AJ-rJ=0$ and observe that
\bean
A^v J -rJ&=& AJ -rJ + \left(\chf_{v=1} \frac{s'}{s}- r\chf_{v=0} \frac{s'}{1-s}\right)(s-v)\\
&=&\left(\chf_{v=1} \frac{s'}{s}- \chf_{v=0} \frac{s'}{1-s}\right)(s-v) \leq 0.
\eean
Define the non-negative function
\[
g_v:=\left(\chf_{v=1} \frac{s'}{s}- \chf_{v=0} \frac{s'}{1-s}\right)(v-s).
\]
$J$ will be $r$-excessive once we show that $g_v$ is integrable with respect to the speed measure of the diffusion defined by $A^v$. It follows from no. 31 on p. 34 of \cite{BorSal} that this measure is given by
\[
m^v(dx):=\left(\chf_{v=1}s^2(x) + \chf_{v=0}(1-s(x))^2\right)\frac{2}{ s'(x)}dx.
\]
Thus,
\[
\int_{-\infty}^{\infty}g_v(x)m^v(dx)= 2\int_{-\infty}^{\infty}s(x)(1-s(x))dx <\infty
\]
since $s$ is the cumulative distribution function of  a normal random variable.  Due to the integral representation formula for $r$-excessive functions (see no. 30 on p. 33 of \cite{BorSal}) we thus have
\[
J(x)= \int_{0}^{\infty} e^{-rs} Q_s^v g_v(x)ds + c_1 \psi_r(x)+ c_2\phi_r(x),
\]
where $(Q^v_t)_{t \geq 0}$ is the transition function for the solutions of (\ref{e:optY}) for a given value of $v$, and $\phi_r$ and $\psi_r$ are decreasing and increasing solutions of $A^vu-ru=0$, respectively. We claim that $c_1=c_2=0$. Indeed, suppose $v=0$. Then $J'(\infty)=1$ and $J(-\infty)=0$. However, since $l$ and $u$ are natural boundaries, $\phi_r(-\infty)=\infty=\psi'_r(\infty)$ (see p.19 of \cite{BorSal}), which in turn yields $c_1=c_2=0$. Similar considerations yield the same conclusion when $v=1$.

Therefore, 
\[
e^{-rt}Q^v_t J(y) = \int_{t}^{\infty} e^{-rs} Q_{s}^v g_v(y)ds.
\]
But the above converges to $0$ as $t \rar \infty$. Indeed, since
\[
Q_{t}^v g_v=\chf_{v=1}\frac{P_t sg_v}{s}+\chf_{v=0}\frac{P_t (1-s)g_v}{1-s},
\]
where $(P_t)_{t \geq 0}$ is the transition function for the Ornstein-Uhlenbeck process with the generator (\ref{e:OUgen}), we have
\[
Q_{t}^v g_v(y)\leq \max\left\{\frac{P_t s'(y)}{s(y)}, \frac{P_t s'(y)}{1-s(y)}\right\}\leq \frac{\sqrt{r}}{\sqrt{\pi}}\max\left\{\frac{1}{s(y)},\frac{1}{1-s(y)}\right\},
\]
\end{proof}
In view of the above lemma we can now construct the optimal strategy of the insider when the market makers use $s$ as their pricing rule.
\begin{proposition} \label{p:insoptber} Suppose that $P_t=s(Y_t)$, where $s$ is given by (\ref{e:OUscale}) and
\[
Y_t=y+ X_t +\int_0^t (rY_s +d)ds.
\]
with $y \in \bbR$ being chosen so that $s(y)=p=\bbP(\Gamma=1)$.
Then, the  following strategy is optimal for the insider:
\be \label{e:insiderstr}
\alpha_t=\left\{\Gamma \frac{s'(Y_t)}{s(Y_t)} -(1-\Gamma)\frac{s'(Y_t)}{1-s(Y_t)}\right\}.
\ee
The expected profit of the insider who uses this strategy equals $J(y)$, where $J$ is the function defined in (\ref{e:J}).
\end{proposition}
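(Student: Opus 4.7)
The plan is a verification argument built on the HJB identity $AJ - rJ = 0$ established in (\ref{e:AJ-rJ}), combined with the first-order relation $J'(x) = s(x) - v$ which is immediate from the definition of $J$ in (\ref{e:J}). I would apply It\^o's formula to $e^{-rt} J(Y_t)$ along an arbitrary admissible strategy, obtain an integral identity for the discounted expected profit in terms of $J(y)$ and a remainder $\bbE^v[e^{-rT} J(Y_T)]$, and then use the non-negativity of $J$ to deduce $\bbE^v[W_\tau] \leq J(y)$ for every admissible $\alpha$, and finally invoke the asymptotic behaviour of $J$ under the candidate dynamics to show that the specific $\alpha$ in (\ref{e:insiderstr}) attains this bound.

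Fix $v \in \{0,1\}$ and an arbitrary admissible $\alpha$. Under $\bbP^v$ the signal satisfies $dY_t = dB_t + (rY_t + d + \alpha_t)\,dt$, so It\^o's formula together with $AJ - rJ = 0$ and $J' = s - v$ yields
\begin{equation*}
d\bigl(e^{-rt} J(Y_t)\bigr) = e^{-rt}\bigl(s(Y_t) - v\bigr)\alpha_t\,dt + e^{-rt}\bigl(s(Y_t) - v\bigr)\,dB_t.
\end{equation*}
Because $s$ takes values in $(0,1)$, the integrand of the stochastic integral is bounded by $e^{-rt}$, so it is a genuine $\bbP^v$-martingale. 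Taking $\bbE^v$ over $[0,T]$ and rearranging gives
\begin{equation*}
\bbE^v\!\int_0^T e^{-rt}(v - s(Y_t))\alpha_t\,dt = J(y) - \bbE^v[e^{-rT} J(Y_T)].
\end{equation*}
Since $J \geq 0$ on $\bbR$ (as the primitive of a function whose unique sign change is at $s^{-1}(v)$, with the convention $s^{-1}(0)=-\infty$, $s^{-1}(1)=+\infty$), the right-hand side is bounded above by $J(y)$ for every $T$. Passing to $T \to \infty$ via (\ref{e:evalins}) and Fubini --- justified by $|s-v|\leq 1$ and $\bbE^v[\tau]=r^{-1}<\infty$ --- produces $\bbE^v[W_\tau] \leq J(y)$ for every admissible strategy.

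For the candidate strategy in (\ref{e:insiderstr}) the added drift converts the SDE for $Y$ into exactly the SDE (\ref{e:optY}) of Theorem \ref{t:optY}, so the law of $Y$ under $\bbP^v$ coincides with the measure $Q_v$ appearing in Lemma \ref{l:Jexcessive}. That lemma furnishes $\bbE^v[e^{-rT} J(Y_T)] \to 0$ as $T \to \infty$, turning the upper bound into an equality and yielding $\bbE^v[W_\tau] = J(y)$. Admissibility of this $\alpha$ reduces to three checks: absolute continuity and $\cG^I$-adaptedness are visible from the closed form, existence and uniqueness of the strong solution are Theorem \ref{t:optY}, and (\ref{adm:alpha}) holds trivially since $0 \leq s \leq 1$ and $\bbE^v[\tau]<\infty$. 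The main delicate point, and the only one where optimality could fail to be attained, is the asymptotic vanishing $\bbE^v[e^{-rT} J(Y_T)] \to 0$; this is precisely where the $h$-transform and the $r$-excessive-function machinery packaged in Lemma \ref{l:Jexcessive} is essential, since without it one would recover only the inequality $\bbE^v[W_\tau] \leq J(y)$ rather than equality.
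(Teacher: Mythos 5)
Your proposal is correct and follows the same verification route as the paper: It\^o's formula combined with the identity $AJ-rJ=0$ and $J'=s-v$ gives the upper bound $\bbE^v[W_\tau]\leq J(y)$ from $J\geq 0$, and attainment under $\alpha^*$ is exactly what Lemma~\ref{l:Jexcessive} supplies. The only place you are slightly more casual than the paper is in calling the absolute continuity of $\theta$ ``visible from the closed form''---the paper instead proves $\int_0^T\bbE^v\bigl[|\alpha^*_t|\bigr]\,dt<\infty$ via the absolute-continuity relation between the $h$-transformed law and the Ornstein--Uhlenbeck law---but this follows in any case from the pathwise integrability of the drift implicit in the strong existence statement of Theorem~\ref{t:optY}, so the argument stands.
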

\begin{proof}
Let $J$ be given by (\ref{e:J}) and recall from (\ref{e:AJ-rJ}) that $AJ-rJ=0$. 
Thus,  Ito's formula together with integration by parts yield
\[
e^{-rt}J(Y_t)=J(y)+\int_0^t e^{-rs}(h(Y_s)-\Gamma)\{dB_s+\alpha_s ds\},
\]
where $\alpha$ is an arbitrary admissible strategy of the insider. 
Since $h$ and $v$ are bounded, 
\[
\int_0^t e^{-rs}(h(Y_s)-\Gamma)dB_s 
\]
is a uniformly integrable martingale. Thus, since $J\geq 0$, 
\[
\bbE^v\int_0^{\infty} e^{-rs}(v-h(Y_s))\alpha_sds \leq J(y),
\]
i.e., $J(y)$ is an upper bound for the expected profits of the insider. Observe that the distribution of $J(Y_t)$ under $\bbP^v$ depends on the choice of $\alpha$. Thus, if we can find an admissible $\alpha$ for which $\bbE^v \lim_{t\rar \infty}e^{-rt}J(Y_t)=0$, it will be optimal. However, Lemma \ref{l:Jexcessive} shows that if $\alpha$ is given by (\ref{e:insiderstr}), then this limit indeed equals $0$. Moreover, since $h$ is bounded, the admissibility of $\alpha$ will follow as soon as
\[
\int_0^t |\alpha_s|ds <\infty, \bbP^v\mbox{-a.s..}
\]

Indeed, on $[\Gamma=1]$ for every $T>0$, we have
\bean
\int_0^{T} \bbE^v\left\{\frac{s'(Y_t)}{s(Y_t)} \right\}dt&=&\int_0^{T} E^y\left\{\frac{s'(R_t)}{s(R_t)}s(R_t) \right\}dt\\
&=&\int_0^{T} E^y\left\{s'(R_t) \right\}dt,
\eean
in view of the absolute continuity relationship between the solutions of (\ref{e:optY}) and the Ornstein-Uhlenbeck process with the generator (\ref{e:OUgen}) as observed in the proof of Theorem \ref{t:optY}. Moreover, since $0<s<1$ and $s'$ is bounded the admissibility on the set $[\Gamma=1]$ is verified. It can be shown similarly that $\alpha$ is admissible on the set $[\Gamma=0]$. This completes the proof.
\end{proof}
A Markovian equilibrium for the market under consideration is given in the following:
\begin{theorem} \label{t:eq_bernoulli} Let $a \equiv 1$, $\phi(x)=rx+d$ for some $d\in \bbR$, $s$ be the function defined in (\ref{e:OUscale}), $y$ is the unique solution of $s(y)=p$, and $\alpha$ be the process given by (\ref{e:insiderstr}). Then, $((1, \phi, y), s, \alpha)$ is an equilibrium.
\end{theorem}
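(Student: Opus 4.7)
The plan is to verify each of the four conditions in the equilibrium definition by assembling the pieces developed in Section \ref{s:Bernoulli}. Admissibility of the pricing rule $((1,\phi,y),s)$ is immediate: with $w\equiv 1$ and $\phi(x)=rx+d$ the SDE $dY_t=d\beta_t+(rY_t+d)\,dt$ is a standard Ornstein--Uhlenbeck equation on $\bbR$ with linear-growth coefficients, hence admits a unique non-exploding strong solution; $w$ is trivially bounded away from zero and $s$, being a normal cumulative distribution function, is strictly increasing.

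For admissibility of the insider's strategy, the density $\alpha$ defined by (\ref{e:insiderstr}) is $\cG^I$-adapted, and the induced $\theta$ is of finite variation and absolutely continuous by construction. The SDE satisfied by $Y$ under the strategy $\alpha$ coincides with the one in Theorem \ref{t:optY} for $E=\{\Gamma=1\}$, and the normalization $\bbP(E)=p=s(y)$ then supplies the unique strong solution. Since $h=s$ takes values in $[0,1]$, the integrability requirement (\ref{adm:alpha}) reduces to $\bbE^v\tau<\infty$, which holds because $\tau$ is exponential with finite mean.

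The martingale condition on $S$ is the structurally delicate step. By Theorem \ref{t:optY}, under $\alpha$ the process $Y$ satisfies $dY_t=dB^Y_t+(rY_t+d)\,dt$ in its own filtration $\cF^Y$ with $B^Y$ an $\cF^Y$-Brownian motion, and $\bbP(\Gamma=1\mid\cF^Y_t)=s(Y_t)$. Since $w\equiv 1$ yields $dX_t=dY_t-(rY_t+d)\,dt$, the filtrations $\cF^X$ and $\cF^Y$ coincide, and $X$ itself is the Brownian motion $B^Y$ in its own filtration; in particular, $X$ is independent of $\tau$ and every $\cF^X$-martingale is continuous, so the hypotheses of Proposition \ref{p:PV} are satisfied. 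It then identifies $P_t=\bbE[\Gamma\mid\cF^X_t]=s(Y_t)$, and one concludes that $S_t=\chf_{[\tau>t]}s(Y_t)+\chf_{[\tau\leq t]}\Gamma$ is the desired $\cG^M$-martingale.

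Optimality of $\alpha$ against the pricing rule $s$ is exactly the content of Proposition \ref{p:insoptber}, which shows that $J(y)$ given by (\ref{e:J}) is simultaneously an upper bound for the expected profits of any admissible strategy and attained by (\ref{e:insiderstr}). The main obstacle I anticipate is keeping careful track of filtrations in the martingale step: one must verify that enlarging $\cF^X$ by the $\tau$-information $(\chf_{[\tau>t]})$ and $(\chf_{[\tau\leq t]}\Gamma)$ does not alter the conditional expectation of $\Gamma$ on $[t<\tau]$, which is precisely where the independence of $\tau$ from $\cF_\infty$ together with the Brownian character of $X$ in its own filtration come into play.
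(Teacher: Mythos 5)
Your proof is correct and follows essentially the same route as the paper: it rests on Proposition \ref{p:insoptber} for the insider's optimality and admissibility, and on the combination of Theorem \ref{t:optY} and Proposition \ref{p:PV} for the $\cG^M$-martingale property of $S$, after observing (as the paper does via Corollary \ref{c:incons}) that $X$ is a Brownian motion in its own filtration and is independent of $\tau$. You spell out the admissibility checks for the pricing rule and the strategy in somewhat more detail than the paper's terse proof, but the structure and the key ingredients are identical.
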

\begin{proof}
Given this choice of $\sigma$, $\phi$, and $s$, we have seen in Proposition \ref{p:insoptber} that $\alpha$ is optimal. Thus, it remains to show, in view of Proposition \ref{p:PV}, that 
\[
s(Y_t)=\bbE[\Gamma|\cF^X_t]=\bbE[\Gamma|\cF^Y_t]
\]
as all $\cF^X$ (or, equivalently, $\cF^Y$) martingales are continuous due to the fact the filtration is Brownian (see Corollary \ref{c:incons} for another manifestation of this fact). 
However, this assertion follows easily from Theorem \ref{t:optY} since $s(y)=\bbP(\Gamma=1)$.
\end{proof}

Although the theorem above gives the impression that there is a continuum of equilibria indexed by $d$, in fact all these choices of $\phi$ lead to the same SDE for the price. Moreover, the insider's optimal strategy does not depend on $d$ when expressed in terms of $P$.
\begin{corollary} \label{c:eq_bernoulli}Let $d$, $\phi$, and $\alpha$ be as in Theorem \ref{t:eq_bernoulli} and let $P^*$ be the equilibrium price associated to $((1, \phi, y), s, \alpha)$. Then, $P^*$ is a uniformly integrable $\cF^X$-martingale with $P^*_{\infty}=\Gamma$, and
\be \label{e:P*dyn}
P^*_t =\bbE[\Gamma] + \int_0^t \lambda(P^*_s)\left\{dB_s + \left( \Gamma \frac{\lambda(P^*_s)}{P^*_s}-(1-\Gamma)\frac{\lambda(P^*_s)}{1-P^*_s}\right)ds\right\},
\ee
where $\lambda(x)=s'_0(s_0^{-1}(x))$ and $s_0$ is the function in (\ref{e:OUscale}) with $d=0$. In particular, the function $\lambda$ does not depend on $d$.

Moreover, the insider's strategy in the equilibrium has the following from:
\be \label{e:insstrP}
\alpha_t^*=\Gamma \frac{\lambda(P^*_t)}{P^*_t}-(1-\Gamma)\frac{\lambda(P^*_t)}{1-P^*_t}.
\ee
\end{corollary}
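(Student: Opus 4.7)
The plan rests on three observations: (i) by Theorem \ref{t:eq_bernoulli}, $P^*_t=s(Y_t)$ with $Y_t=y+X_t+\int_0^t(rY_u+d)du$ and the insider using (\ref{e:insiderstr}); (ii) $s$ is a scale function for the generator $A$ in (\ref{e:OUgen}), i.e.\ $\frac{1}{2}s''(x)+(rx+d)s'(x)\equiv 0$; and (iii) the parameter $d$ enters the triple $(Y,s,s')$ only through the translation $x\mapsto x+d/r$, so it must cancel when everything is expressed in terms of $P^*=s(Y)$. Ingredient (i) provides the starting dynamics via It\^o, (ii) kills the OU drift, and (iii) extracts the $d$-free map $\lambda$.

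First I would dispose of the uniformly integrable martingale claim. Theorem \ref{t:eq_bernoulli} together with Proposition \ref{p:PV} gives $P^*_t=\bbE[\Gamma|\cF^X_t]$ directly, and because $\Gamma\in\{0,1\}$ the process $P^*$ takes values in $[0,1]$ and is therefore bounded, hence uniformly integrable. For the terminal value, I invoke the limiting behaviour of $Y$ from Theorem \ref{t:optY} applied with $E=[\Gamma=1]$: on $[\Gamma=1]$ we have $Y_t\to+\infty$ and on $[\Gamma=0]$ we have $Y_t\to-\infty$. Since $s(-\infty)=0$ and $s(\infty)=1$, it follows that $s(Y_t)\to\chf_{[\Gamma=1]}=\Gamma$ almost surely, so $P^*_\infty=\Gamma$.

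For (\ref{e:P*dyn}), I would apply It\^o's formula to $s(Y_t)$ using $dY_t=dB_t+\alpha_t^*\,dt+(rY_t+d)\,dt$ and $d\langle Y\rangle_t=dt$; by (ii) the $dt$-terms coming from the OU generator cancel, leaving $dP^*_t=s'(Y_t)\{dB_t+\alpha_t^*\,dt\}$. To rewrite $s'(Y_t)$ as a function of $P^*_t$, I substitute $u=y+d/r$ in (\ref{e:OUscale}) to obtain $s(x)=s_0(x+d/r)$ and therefore $s'(x)=s_0'(x+d/r)$. Hence $Y_t+d/r=s_0^{-1}(P^*_t)$ and $s'(Y_t)=s_0'(s_0^{-1}(P^*_t))=\lambda(P^*_t)$, which simultaneously identifies $\lambda$ and shows that it is independent of $d$. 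Plugging $s'(Y_t)=\lambda(P^*_t)$ and $s(Y_t)=P^*_t$ into (\ref{e:insiderstr}) produces (\ref{e:insstrP}), and substituting the resulting expression for $\alpha^*$ into the reduced SDE yields (\ref{e:P*dyn}), while the initial condition $P^*_0=s(y)=p=\bbE[\Gamma]$ is built into the choice of $y$. I do not expect any serious obstacle; the only step that is not pure bookkeeping is recognising the translation identity $s(x)=s_0(x+d/r)$, which is what makes $\lambda$ canonical.
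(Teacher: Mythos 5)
Your proposal is correct and follows essentially the same route as the paper: apply It\^o to $P^*=s(Y)$, use that $s$ is a scale function so the OU drift cancels, and then use the translation identity $s(x)=s_0(x+d/r)$ to rewrite $s'(Y_t)=s_0'(s_0^{-1}(P^*_t))=\lambda(P^*_t)$, which is exactly what makes $\lambda$ canonical and $d$-independent. The only cosmetic difference is that you establish uniform integrability via $P^*_t=\bbE[\Gamma\mid\cF^X_t]\in[0,1]$ while the paper simply notes that $s$ is bounded, and you spell out the $P^*_\infty=\Gamma$ step via Theorem~\ref{t:optY} a bit more explicitly; both are equivalent.
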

\begin{proof} Let $d \in \bbR$ be fixed and $Y=B+\int_0^{\cdot}(\phi(Y_t)+\alpha_t)dt$ so that $P^*_t=s(Y_t)$, where $s$ is given by (\ref{e:OUscale}). Then, 
\[
dP^*_t= s'(Y_t)\left\{dB_t + \left(\Gamma \frac{s'(Y_t)}{s(Y_t)} -(1-\Gamma)\frac{s'(Y_t)}{1-s(Y_t)}\right)dt\right\}.
\]
Observe that $P^*_0=s(y)=\bbE[\Gamma]$ by the choice of $y$. 

Next, note that
\[
s(x)=s_0\left(x+\frac{d}{r}\right),
\]
implying $s'(x)=s_0'(x+\frac{d}{r})$ as well as $s^{-1}(x)=s_0^{-1}(x)-\frac{d}{r}$. Combining these two observations we then deduce that
\[
s'(s^{-1}(x))=s_0'(s_0^{-1}(x)),\; \forall d \in \bbR.
\]
Therefore,
\[
dP^*_t= s_0'(s_0^{-1}(P^*_t))\left\{dB_t + \left(\Gamma \frac{ s_0'(s_0^{-1}(P^*_t))}{P^*_t} -(1-\Gamma)\frac{ s_0'(s_0^{-1}(P^*_t))}{1-P^*_t}\right)dt\right\}.
\]
This yields, in view of the definition of $\lambda$, the dynamics of $P^*$ given by (\ref{e:P*dyn}).

That $P^*$ is uniformly integrable follows from the boundedness of $s$. Its limiting property $P^*_{\infty}=s(Y_{\infty})=\Gamma$ is due to  Theorem \ref{t:optY}.

The form of the insider's strategy follows immediately from the corresponding change of variable.
\end{proof}

In Kyle's model with risk-neutral market makers it is in general observed that in equilibrium the insider's trades are inconspicuous, i.e. the distribution of the equilibrium demand process equals to that of the noise trades. We observe the same phenomenon here.
\begin{corollary} \label{c:incons} Consider the equilibrium described in Theorem \ref{t:eq_bernoulli} or Corollary \ref{c:eq_bernoulli} and let $X^*$ denote the equilibrium level of demand. Then, $X^*$ is a Brownian motion in its own filtration. 
\end{corollary}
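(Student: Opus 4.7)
The plan is to reduce this to Theorem \ref{t:optY} by observing that the equilibrium SDE for the market makers' signal $Y$, when driven by the insider's optimal rate (\ref{e:insiderstr}), has precisely the form considered in that theorem with $E=[\Gamma=1]$.

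First, I would record the pathwise relation between $X^*$ and $Y$. By construction $Y_t = y + X^*_t + \int_0^t (rY_s+d)\,ds$, so $Y$ is adapted to $\cF^X$. Conversely $X^*_t = Y_t - y - \int_0^t(rY_s+d)\,ds$ shows that $X^*$ is adapted to $\cF^Y$; hence $\cF^X=\cF^Y$. It therefore suffices to show that $X^*$ is a Brownian motion in $\cF^Y$.

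Second, I would verify that the present setting matches the hypotheses of Theorem \ref{t:optY}. With $\alpha$ given by (\ref{e:insiderstr}) and $E:=[\Gamma=1]\in\cF_0$, the SDE $dY_t=dX^*_t+(rY_t+d)\,dt=dB_t+\alpha_t\,dt+(rY_t+d)\,dt$ becomes exactly
\[
dY_t = dB_t + \Bigl\{rY_t+d+\chf_E\frac{s'(Y_t)}{s(Y_t)}-\chf_{E^c}\frac{s'(Y_t)}{1-s(Y_t)}\Bigr\}dt.
\]
Since $y$ was chosen so that $s(y)=p=\bbP(\Gamma=1)=\bbP(E)$, the hypothesis $\bbP(E)=s(y)$ of Theorem \ref{t:optY} is met. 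That theorem then gives a $\cF^Y$-Brownian motion $B^Y$ such that $dY_t = dB^Y_t + (rY_t+d)\,dt$ in the filtration $\cF^Y$.

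Finally, subtracting the two expressions for $dY_t$ yields $dX^*_t = dB^Y_t$, so $X^*=B^Y$ is an $\cF^Y$-Brownian motion, and because $\cF^X=\cF^Y$, it is a Brownian motion in its own filtration. There is essentially no obstacle beyond checking the hypotheses of Theorem \ref{t:optY}; the nontrivial content (the filtering identity making the drift $rY+d$ in $\cF^Y$) is already isolated in that theorem, and the choice of $y$ guarantees the probability-matching condition that powers the $h$-transform argument there.
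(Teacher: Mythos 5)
Your proof is correct and reaches the same conclusion by a somewhat different route. The paper's proof uses the representation $X^*_t=B_t+\int_0^t\alpha^*_s\,ds$, computes $\bbE[\alpha^*_t\,|\,\cF^{X^*}_t]=0$ from the uniformly integrable martingale property $P^*_t=\bbE[\Gamma\,|\,\cF^{P^*}_t]$ established in Corollary \ref{c:eq_bernoulli}, and then invokes the innovations theorem (Theorem 8.1 in \cite{LS}) to conclude. You instead lean directly on Theorem \ref{t:optY}: you observe that under the equilibrium strategy the signal $Y$ satisfies precisely the SDE hypothesized there with $E=[\Gamma=1]$ and $\bbP(E)=s(y)=p$, from which that theorem delivers the representation $dY_t=dB^Y_t+(rY_t+d)\,dt$ with $B^Y$ an $\cF^Y$-Brownian motion; comparing this with the defining SDE $dY_t=dX^*_t+(rY_t+d)\,dt$ and using $X^*_0=B^Y_0=0$ gives $X^*=B^Y$ pathwise, and $\cF^{X^*}=\cF^Y$ finishes it. The two arguments are logically equivalent --- Theorem \ref{t:optY}'s conclusion about the form of $Y$ in its own filtration is itself derived from the same filtering result --- but your version is a bit cleaner: it sidesteps the explicit conditional-expectation computation and instead identifies $X^*$ directly with the innovation Brownian motion already provided by Theorem \ref{t:optY}. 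Both proofs are correct; the choice is essentially one of which prior result to cite.
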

\begin{proof}
Note that 
\[
X^*_t=B_t +\int_0^t \alpha^*_sds
\]
where $\alpha^*$ is given by (\ref{e:insstrP}). Recall that the relationship between $X$ and $Y$ entails they generate the same filtration. Moreover, since the pricing rule is a strictly increasing function, we deduce that the filtrations generated by $P^*$ and $X^*$ coincide. Therefore,
\bean
\bbE\left[\alpha^*_t\big|\cF^{X^*}_t\right]&=&\bbE\left[\alpha^*_t\big|\cF^{P^*}_t\right]\\
&=&\bbE\left[\Gamma\big|\cF^{P^*_t}\right] \frac{\lambda(P_t)}{P_t}-\left(1-\bbE\left[\Gamma\big|\cF^{P^*_t}\right]\right)\frac{\lambda(P_t)}{1-P_t}\\
&=&0
\eean
since $P^*$ is a uniformly integrable $\cF^{Y^*}$-martingale with $P^*_{\infty}=\Gamma$ by Corolloary \ref{c:eq_bernoulli}. Thus,  $X^*$ is a Brownian motion in its own filtration by Theorem 8.1 in \cite{LS}.
\end{proof}
Since we are able to characterise the equilibirium price process as a function of an Ornstein-Uhlenbeck process, this allows us to observe a deviation in equilibrium from the original model in Kyle. In \cite{K}, as explained in \cite{bcw} and \cite{CD}, the reciprocal of market depth, follows a martingale preventing `systematic changes' in the market depth. In the absence of such systematic changes the insider cannot acquire a large position when the depth is low to liquidate at a later date when the liquidity is higher to obtain unbounded profits. The reciprocal of the market depth is called {\em Kyle's lambda} and it is given by the process $\lambda(P^*)$ in our model, where $\lambda$ and $P^*$ are as in Corollary \ref{c:eq_bernoulli}. The next result shows that, contrary to Kyle's findings, the reciprocal of the market depth follows a supermartingale, which was also observed by Back and Baruch in \cite{BB} using different arguments.
\begin{corollary} \label{c:depth} Let $\lambda$ and $P^*$ be as in  Corollary \ref{c:eq_bernoulli}. Then, $\lambda(P^*)$ is an $\cG^M$-supermartingale such that $\lim_{t \rar \infty}\bbE\left[\lambda(P^*_t)\right]=0$, i.e. $\lambda(P^*)$ is an $\cG^M$-potential.
\end{corollary}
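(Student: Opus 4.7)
The plan is to identify $\lambda(P^*_t)$ explicitly in terms of the market makers' signal $Y$ and then read off its semimartingale decomposition via It\^o's formula. First, because $s(x)=s_0(x+d/r)$, one has $s'(s^{-1}(y))=s_0'(s_0^{-1}(y))$ (as already observed in the proof of Corollary~\ref{c:eq_bernoulli}), so $\lambda(P^*_t)=s'(Y_t)$. By Corollary~\ref{c:incons}, $X^*$ is a $\bbP$-Brownian motion in $\cF^{X^*}$, so in this filtration $Y$ obeys the transient OU equation $dY_t=dB^Y_t+(rY_t+d)dt$ with $B^Y=X^*$. Since $\tau$ is independent of $\cF^{X^*}_\infty$, the immersion hypothesis transfers these dynamics to the relevant part of $\cG^M$.

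The computational core is a single algebraic identity. Because $s$ is a scale function for the operator $A$ of (\ref{e:OUgen}), we have $\frac{1}{2}s''+(rx+d)s'=0$; differentiating once with respect to $x$ yields
\[
\tfrac{1}{2}s'''+(rx+d)s''=-rs',\qquad\text{i.e. }\;As'=-rs'.
\]
Applying It\^o's formula to $s'(Y_t)$ and substituting this relation gives
\[
s'(Y_t)=s'(y)-r\int_0^t s'(Y_u)\,du+\int_0^t s''(Y_u)\,dB^Y_u.
\]
Both $s'(x)=\sqrt{r/\pi}\exp\bigl(-r(x+d/r)^2\bigr)$ and $s''$ are bounded on $\bbR$, so the stochastic integral is a genuine martingale. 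This exhibits $\lambda(P^*)=s'(Y)$ as a bounded non-negative $\cG^M$-supermartingale whose Doob--Meyer compensator is the absolutely continuous process $r\int_0^{\cdot}\lambda(P^*_u)\,du$.

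Finally, taking expectations and writing $g(t):=\bbE[\lambda(P^*_t)]$ produces the linear Volterra equation $g(t)=g(0)-r\int_0^t g(u)\,du$, whose unique solution is $g(t)=\lambda(\bbE[\Gamma])\,e^{-rt}\to 0$, establishing the potential property. I anticipate no serious technical obstacle: the whole argument boils down to the ODE identity $As'=-rs'$, which expresses in infinitesimal form the natural fact that Kyle's lambda is dissipated in mean at the exponential hazard rate $r$ of the announcement clock. The only point requiring mild care is upgrading the $\cF^{X^*}$-supermartingale statement to $\cG^M$, which is routine given the independence of $\tau$ from $\cF^{X^*}_\infty$.
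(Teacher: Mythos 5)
Your proof is correct and follows essentially the same route as the paper: It\^o's formula applied to $\lambda(P^*)=s'(Y)$ with $Y$ the transient OU process in its own filtration, the key algebraic input being $As'=-rs'$ (the paper writes this concretely, taking $d=0$, via $s_0''(x)=-2rx\,s_0'(x)$, which yields the same decomposition $ds_0'(Y_t)=-s_0'(Y_t)(2rY_t\,d\beta_t+r\,dt)$). The only genuine difference is in how the potential property is established: you solve the linear ODE for $g(t)=\bbE[\lambda(P^*_t)]$ to get the explicit decay $g(t)=\lambda(\bbE[\Gamma])e^{-rt}\to 0$, whereas the paper invokes dominated convergence using $|Y^*_t|\to\infty$; your version is marginally sharper since it exhibits the exact exponential rate $r$.
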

\begin{proof}
As observed earlier suppose without loss of generality that $b=0$ so that $\phi(x)=rx$ and the market makers' signal, $Y^*$, solves in its own filtration
\[
dY^*_t= d\beta_t +rY^*_tdt,
\]
where $\beta$ is an $\cF^{Y^*}$-Brownian motion. Since $P^*=s_0(Y^*)$, it suffices to show that $s_0'(Y^*)$ is a $\cF^{Y^*}$-supermartingale. Independence of $\tau$ and $P^*$ will then imply that $s_0'(Y^*)$, hence $\lambda(Y^*)$, is a $\cG^M$-supermartingale.

On the other hand,
\[
s_0'(x)=\sqrt{\frac{r}{\pi}}e^{-rx^2}.
\]
Thus, an application of Ito's formula yields
\[
ds_0'(Y^*_t)=-s_0'(Y^*_t)\left(2rY^*_td\beta_t +rdt\right),
\]
which in turn implies $s_0'(Y^*)$ has the required supermartingale property. Moreover, as a consequence of the Dominated Convergence Theorem, we obtain
\[
\lim_{t \rar \infty} \bbE\left[s_0'(Y^*_t)\right]=\bbE\left[\lim_{t \rar \infty}s_0'(Y^*_t)\right]=0
\]
since $\lim_{t \rar \infty}|Y^*_t|=\infty$.
\end{proof}
\begin{remark} In fact the actual marginal price impact that is observed in the market is given by $\lambda(P_t)\chf_{[\tau>t]}$ since the price is constant from $\tau$ onwards. Note that this is again a $\cG^M$-supermartingale in view of Lemma \ref{l:pen}.
\end{remark}
	
A simple consequence of the above result is that the market gets more liquid on average as time passes. The reason for this deviation from Kyle is due to the random deadline to the whole trading activities, which is independent of everything else.  The insider is aware of the fact that her informational advantage is going to end at a totally inaccessible stopping time, which will come as a surprise. She nevertheless chooses not to trade aggressively revealing her information quickly since, as observed by Back and Baruch \cite{BB}, the price impact is decreasing in time so the risk of waiting can be compensated by lower execution costs.
\begin{remark} \label{r:extension} The computations made in this section suggest that there is no equilibrium unless $\Gamma$ has a Bernoulli distribution. Indeed, if $X$ is independent of $\tau$, Proposition \ref{p:YOU} implies that $Y$ is a transient Ornstein-Uhlenbeck process. On the other hand, $P_{\infty}=\Gamma$ and the equilibrium price is given by a scale function of $Y$. Consequently, $P_{\infty}$ can take only two different values since $Y_{\infty}\in \{-\infty, \infty\}$.
\end{remark} 
\subsection{More general liquidation value} \label{s:general}  In this section we consider more general distributions for $\Gamma$ and suppose that $\Gamma \eid f(\eta)$ for some {\em continuous and strictly increasing} $f$ and a standard Normal random variable, $\eta$. We can incorporate atoms and consider more general distributions for $\Gamma$. However, this will only result in more complicated coefficients for $Y^*$ in the filtration of the insider and will not significantly alter the qualitative inferences that one can make within this model. Thus, for the simplicity of  the exposition and the model we make the following assumption.
\begin{assumption} \label{a:noatom}
There exists a continuous and strictly increasing function $f:\bbR\mapsto \bbR$ such that $\Gamma\eid f(\eta)$, where $\eta$ is a standard normal random variable\footnote{$\eid$ stands for {\em equality in distribution}.}.
\end{assumption}
Remark {\ref{r:extension} suggests that one cannot go beyond a Bernoulli distributed payoff  using a time-homogeneous SDE for $Y$. Collin-Dufresne et al. \cite{CFM}  consider a similar problem when $\Gamma\eid\eta$ and obtain an equilibrium, where the coefficients of $Y^*$ depends on time using ideas from Kalman filtering. 

 We will next show that an equilibrium exists for more general payoff  distributions. In fact, the market makers' equilibrium signal process, $Y^*$, will turn out to be just a time-change of the Ornstein-Uhlenbeck process that appears in the equilibrium when $\Gamma$ had a Bernoulli distribution as in last section. 

To wit, let's suppose
\[
dY_t =\sigma(t)a(Y_t)dX_t + \sigma^2(t)\phi(Y_t)dt,
\]
where $\sigma:\bbR_+ \mapsto \bbR_+$. Following similar arguments that were used in the beginning of Section \ref{s:Bernoulli} after the obvious modifications we obtain
\bea
h_t + \frac{1}{2}a^2 \sigma^2 h_{yy} + \sigma^2 \phi h_y&=& 0 \label{e:ghrule} \\
\frac{1}{2}a\sigma^2 a'' +\phi\sigma^2 \frac{a'}{a} -\sigma^2 \phi' +\frac{\sigma'}{\sigma} +r&=& 0. \label{e:gphia}
\eea
It is easy to check that when $\sigma \equiv 1$ the above equations reduce to (\ref{e:hrule}) and (\ref{e:phia}). 

As in previous section we shall choose $a\equiv 1$ and $\phi(x)=rx$. This implies
\be \label{e:ODEsigma}
\frac{\sigma'}{\sigma(1-\sigma)(1+\sigma)}=-r.
\ee
Since 
\[
\frac{1}{x(1-x)(1+x)}=\frac{1}{x}+\frac{1}{2(1-x)}-\frac{1}{2(1+x)},
\]
integrating (\ref{e:ODEsigma}) yields
\[
\frac{\sigma^2(t)}{1-\sigma^2(t)}=C^2 e^{-2rt}
\]
for some constant $C$ to be determined later. Thus, we can solve the above to deduce
\be \label{e:sigma}
\sigma^2(t)= \frac{C^2  e^{-2rt}}{1+C^2 e^{-2rt}}.
\ee
The next lemma will define a function, which will later turn out to be the value function for the insider. 
\begin{lemma} \label{l:gJ} Let $h:\bbR_+\times \bbR \mapsto \bbR$ satisfy (\ref{e:ghrule}) such that $h(t,\cdot)$ is strictly increasing for each $t \geq 0$. Consider
\be \label{e:gJ}
J(t,y):=\int_{h^{-1}(t,v)}^y \frac{h(t,x)-v}{\sigma(t)}dx +\half e^{rt}\int_t^{\infty}e^{-rs}\sigma(s)h_y(s,h^{-1}(s,v))ds,
\ee
where $h^{-1}(t,\cdot)$ represents the inverse of $h(t,\cdot)$ for every $t \geq 0$.
Then,
\[
J_t +\half \sigma^2(t)J_{yy}+\sigma^2(t)ryJ_y -rJ=0
\]
provided
\[
\int_t^{\infty}e^{-rs}\sigma(s)h_y(s,h^{-1}(s,v))ds<\infty, \; \forall t\geq 0.
\]
\end{lemma}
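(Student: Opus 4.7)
The plan is to verify the PDE by direct differentiation. Write $J(t,y)=I(t,y)+K(t)$, where
\begin{equation*}
I(t,y) := \int_{h^{-1}(t,v)}^y \frac{h(t,x)-v}{\sigma(t)}\,dx, \qquad K(t) := \frac{1}{2}e^{rt}\int_t^{\infty}e^{-rs}\sigma(s)h_y(s,h^{-1}(s,v))\,ds.
\end{equation*}
Since $K$ is independent of $y$, I would read off $J_y=(h(t,y)-v)/\sigma(t)$ and $J_{yy}=h_y(t,y)/\sigma(t)$ directly, so that the spatial part of the PDE contributes $\frac{\sigma(t)}{2}h_y(t,y)+\sigma(t)\,r\,y\bigl(h(t,y)-v\bigr)$.

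Next I would compute $J_t$. The integrability assumption permits differentiating $K$ under the integral, giving $K'(t)=rK(t)-\frac{1}{2}\sigma(t)h_y(t,h^{-1}(t,v))$. For $I$, the Leibniz boundary term at $x=h^{-1}(t,v)$ vanishes because the integrand is zero there, so
\begin{equation*}
I_t(t,y)=\int_{h^{-1}(t,v)}^y \frac{h_t(t,x)}{\sigma(t)}\,dx - \frac{\sigma'(t)}{\sigma(t)}\,I(t,y).
\end{equation*}
Into the remaining integral I would substitute (\ref{e:ghrule}) with $a\equiv 1$ and $\phi(x)=rx$, namely $h_t=-\frac{1}{2}\sigma^2 h_{yy}-\sigma^2 r x h_y$; the $h_{yy}$ piece integrates directly, and the $x h_y$ piece is handled by integration by parts (after writing $\int h\,dx=\sigma I+v(y-h^{-1}(t,v))$ to absorb the constant-$v$ contributions). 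The resulting expression is
\begin{equation*}
\int_{h^{-1}(t,v)}^y \frac{h_t(t,x)}{\sigma(t)}\,dx = -\frac{\sigma(t)}{2}\bigl[h_y(t,y)-h_y(t,h^{-1}(t,v))\bigr] - \sigma(t)ry\bigl(h(t,y)-v\bigr) + \sigma^2(t)\, r\, I(t,y).
\end{equation*}

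Assembling $J_t=I_t+K'(t)$ and inserting into $J_t+\frac{1}{2}\sigma^2 J_{yy}+\sigma^2 ryJ_y-rJ$, the $\frac{\sigma(t)}{2}h_y(t,h^{-1}(t,v))$ boundary term in $I_t$ cancels the matching piece of $K'(t)$, the $h_y(t,y)$ and $y(h(t,y)-v)$ terms cancel the spatial contributions, and the $rK$ piece of $K'(t)$ cancels the $rK$ inside $rJ$. What remains is the single identity $\bigl[\sigma^2(t)r-\sigma'(t)/\sigma(t)-r\bigr]\,I(t,y)=0$. The whole argument then turns on (\ref{e:ODEsigma}): rearranging it yields $\sigma'/\sigma=-r(1-\sigma^2)=-r+r\sigma^2$, so the bracket is identically zero. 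The main obstacle is the bookkeeping in the integration-by-parts step—in particular, pairing the boundary contribution at $x=h^{-1}(t,v)$ arising in $I_t$ with the corresponding pointwise piece in $K'(t)$; conceptually, the additive term $K(t)$ in the definition of $J$ is put there precisely so that this cancellation occurs, and the ODE for $\sigma$ was designed precisely so that the residual bracket vanishes.
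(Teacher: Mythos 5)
Your proof is correct and follows essentially the same direct-verification route as the paper: compute $J_y$, $J_{yy}$, $J_t$ by Leibniz's rule, substitute the PDE~(\ref{e:ghrule}) for $h_t$, use one integration by parts on $\int x\,h_y\,dx$, and invoke the $\sigma$-ODE~(\ref{e:ODEsigma}) for the final cancellation. The only difference is cosmetic bookkeeping — you perform the integration by parts while substituting $h_t$ and isolate the ODE for $\sigma$ as a residual bracket multiplying $I$, whereas the paper first folds the $\sigma'$-term and the $-rI$-term together via the ODE and saves the integration by parts for the last line; the ingredients and logic are identical.
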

\begin{proof}
Direct differentiation yields
\bean
J_t&=&\int_{h^{-1}(t,v)}^y \frac{h_t(t,x)}{\sigma(t)}dx - \int_{h^{-1}(t,v)}^y \frac{h(t,x)-v}{\sigma^2(t)}\sigma'(t)dx\\
&&+\frac{re^{rt}}{2}\int_t^{\infty}e^{-rs}\sigma(s)h_y(s,h^{-1}(s,v))ds-\half\sigma(s)h_y(t,h^{-1}(t,v))\\
&=&-\sigma(t)\int_{h^{-1}(t,v)}^y \left\{\half h_{yy}(t,x)+ h_y(t,x) rx \right\}dx- \int_{h^{-1}(t,v)}^y \frac{h(t,x)-v}{\sigma^2(t)}\sigma'(t)dx\\
&&+\frac{re^{rt}}{2}\int_t^{\infty}e^{-rs}\sigma(s)h_y(s,h^{-1}(s,v))ds-\half\sigma(s)h_y(t,h^{-1}(t,v)) \\
&=&\frac{\sigma(t)}{2}\left(h_y(t,h^{-1}(t,v))-h_y(t,y)\right) - \int_{h^{-1}(t,v)}^y \frac{h(t,x)-v}{\sigma^2(t)}\sigma'(t)dx\\
&&-\sigma(t)\int_{h^{-1}(t,v)}^y  h_y(t,x) rx dx+\frac{re^{rt}}{2}\int_t^{\infty}e^{-rs}\sigma(s)h_y(s,h^{-1}(s,v))ds-\half\sigma(t)h_y(t,h^{-1}(t,v)).
\eean
Thus, 
\bean
J_t +AJ -rJ&=&\sigma(t)ry (h(t,y)-v) -\int_{h^{-1}(t,v)}^y \frac{(h(t,x)-v)(r\sigma+\sigma'(t))}{\sigma(t)}dx\\
&&-\sigma(t)\int_{h^{-1}(t,v)}^y  h_y(t,x) rx dx\\
&=&\sigma(t)ry (h(t,y)-v) -r\sigma(t)\int_{h^{-1}(t,v)}^y (h(t,x)-v)dx-\sigma(t)\int_{h^{-1}(t,v)}^y  h_y(t,x) rx dx\\
&=&0,
\eean
where the last equality follows from integration by parts.
\end{proof}
The PDE (\ref{e:ghrule}) satisfied by $h$ indicates that the market makers's signal in equilibrium will be a time-changed Ornstein-Uhlenbeck process, where the time change is given by
\be \label{e:timechange}
V(t):=\int_0^t\sigma^2(s)ds=\frac{1}{2r}\log \frac{1+C^2}{1+C^2 e^{-2rt}}.
\ee
Indeed, any solution of (\ref{e:ghrule}) can be obtained by a time change as we see in the next lemma.
\begin{lemma} Suppose that $a\equiv 1$ and $\phi(x)=rx$. Then $h$ is a solution of (\ref{e:ghrule}) iff $g$ defined by
\[
g(t,y):=h(V^{-1}(t),y),
\]
where $V$ is the absolutely continuous function given in (\ref{e:timechange}), solves
\be \label{e:gpde}
g_t+\half g_{yy}+ ry g_y=0.
\ee
\end{lemma}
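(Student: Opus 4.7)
The argument is a straightforward change-of-variables via the chain rule, so I would present it as a single computation. First I would observe from (\ref{e:sigma}) that $\sigma^2(t)>0$ for every $t\geq 0$, so the function $V$ defined in (\ref{e:timechange}) is strictly increasing and absolutely continuous with $V'(t)=\sigma^2(t)$. Its inverse $V^{-1}$ is therefore strictly increasing and absolutely continuous on $[0,V(\infty))$, with $(V^{-1})'(t)=1/\sigma^2(V^{-1}(t))$.

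Next I would differentiate $g(t,y)=h(V^{-1}(t),y)$ directly. The spatial derivatives transfer without any Jacobian factor, so $g_y(t,y)=h_y(V^{-1}(t),y)$ and $g_{yy}(t,y)=h_{yy}(V^{-1}(t),y)$, whereas the temporal derivative picks up the Jacobian of the time change:
\[
g_t(t,y)=h_t(V^{-1}(t),y)\cdot (V^{-1})'(t)=\frac{h_t(V^{-1}(t),y)}{\sigma^2(V^{-1}(t))}.
\]

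Substituting these identities into the left-hand side of (\ref{e:gpde}) and multiplying through by $\sigma^2(V^{-1}(t))$ gives
\[
\sigma^2(V^{-1}(t))\left[g_t+\half g_{yy}+ry\, g_y\right]=\left[h_t+\half \sigma^2(s)h_{yy}+\sigma^2(s)\, ry\, h_y\right]_{s=V^{-1}(t)},
\]
which is exactly the left-hand side of (\ref{e:ghrule}) evaluated at $(V^{-1}(t),y)$ under the specialisation $a\equiv 1$, $\phi(y)=ry$. Since $\sigma^2>0$ everywhere and $V^{-1}$ is a bijection between $[0,V(\infty))$ and $[0,\infty)$, vanishing of the first bracket for every $(t,y)$ is equivalent to vanishing of the second bracket for every $(s,y)$, which yields both directions of the claimed equivalence.

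There is no real obstacle: the whole proof is one application of the chain rule. The only points requiring any care are keeping the Jacobian factor $(V^{-1})'=1/\sigma^2$ attached to the $t$-derivative and invoking the strict positivity of $\sigma^2$ at the end, which is what upgrades the computation from an implication to an iff.
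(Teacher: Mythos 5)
Your proof is correct and is essentially the same argument the paper uses, only written out in full: the paper's proof is the one-line remark that $h(t,y)=g(V(t),y)$ and $dV(t)=\sigma^2(t)\,dt$, leaving the chain-rule calculation implicit, whereas you carry the Jacobian through explicitly and note that dividing by $\sigma^2>0$ gives the equivalence. Nothing further is needed.
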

\begin{proof}
Note that $h(t,y)=g(V(t),y)$. Since $dV(t)=\sigma^2(t)dt$, the claim follows.
\end{proof}
Consistent with the findings of the previous section the insider should construct a bridge process, $Y^*$, such that $\lim_{t \rar \infty} h^*(t,Y^*_t)=\Gamma$ and $Y^*$, in its own filtration, follows
\[
dY^*_t= \sigma(t) dB^Y_t + r\sigma^2(t)Y^*_tdt,
\]
i.e. a time-changed version of the Ornstein-Uhlenbeck process with the generator (\ref{e:OUgen}), where $d=0$. As we observed in the previous lemma the time change is given by the function $V(t)$, which converges to $V(\infty)=\frac{1}{2r}\log (1+C^2)<\infty$. This implies that the distribution of $Y^*_{\infty}$ equals that of the Ornstein-Uhlenbeck process at $V(\infty)$ defined by (\ref{e:OUgen}) with $d=0$. This distribution is Gaussian and allows us to go beyond a Bernoulli distribution for $\Gamma$. 

Let $p(t,x,y)$ be the transition density of the Ornstein-Uhlenbeck process defined by (\ref{e:OUgen}) with $d=0$. It is well-known that
\[
p(t,x,y)= q\left(\frac{e^{2rt}-1}{2r}, y-x e^{rt}\right),
\]
where $q(t,x)=\frac{1}{\sqrt{2 \pi t}}e^{-\frac{x^2}{2t}}$. The next theorem defines the bridge process that will be the key to the insider's strategy in the equilibrium.
\begin{theorem} \label{t:gbridge} Let $f$ be the function in Assumption \ref{a:noatom} and assume $C^2=2r$. Then, for any $v \in \bbR$, there exists a unique strong solution to 
\be
Y_t =\int_0^t\sigma(s)dB_s + r \int_0^t \frac{f^{-1}(v)- Y_s \cosh\left(\half \log (1+ 2r e^{-2rs})\right)}{\sinh\left(\half \log (1+ 2r e^{-2rs})\right)}\sigma^2(s)ds,
\ee
where $\sigma>0$ is defined via (\ref{e:sigma}). Moreover,
$\lim_{t \rar \infty} Y_t=f^{-1}(v)$, $\bbQ^v$-a.s., where $\bbQ^v$ is the law of the solution. Moreover,
\be \label{e:genYMM}
\bbE^{\bbQ^v}\left[F(Y_s; s\leq t)\right]=\frac{\bbE^{\bbQ}\left[p(V(\infty)-V(t), Y_t, f^{-1}(v))F(Y_s; s\leq t)\right]}{p(V(\infty),0, f^{-1}(v))},
\ee
where $F$ is a bounded measurable function and $\bbQ$ is the law of the unique solution to 
\be \label{e:genYtc}
Y_t= \int_0^t\sigma(s)dB_s + r \int_0^t \sigma^2(s)Y_s ds.
\ee
\end{theorem}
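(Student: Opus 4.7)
My plan is to build the solution as a Doob $h$-transform (equivalently, a Cameron--Martin bridge construction) of the reference diffusion in (\ref{e:genYtc}). The reference SDE (\ref{e:genYtc}) is linear with bounded deterministic coefficients, so under $\bbQ$ it admits a unique strong solution; moreover the time-changed process $R_u := Y_{V^{-1}(u)}$, $u\in[0,V(\infty))$, is an Ornstein--Uhlenbeck process with generator (\ref{e:OUgen}) and $d=0$. With the choice $C^{2}=2r$ one gets $e^{2rV(\infty)}=1+2r$, so $V(\infty)=\frac{1}{2r}\log(1+2r)<\infty$ and $Y_\infty=R_{V(\infty)}$ is $\bbQ$-a.s.\ finite and standard normal (variance $(e^{2rV(\infty)}-1)/(2r)=1$). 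This finite terminal time on the $R$-clock is what makes conditioning on $R_{V(\infty)}=f^{-1}(v)$ meaningful.

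The central object is the candidate Radon--Nikodym density
\[
M_t := \frac{p\!\left(V(\infty)-V(t),\,Y_t,\,f^{-1}(v)\right)}{p\!\left(V(\infty),\,0,\,f^{-1}(v)\right)}, \qquad t\ge 0 .
\]
Because $p(\cdot,\cdot,z)$ satisfies the backward Kolmogorov equation $\partial_s p=\tfrac12 \partial_{yy}p+ry\,\partial_y p$ for the OU process, It\^o's formula applied to $M_t=g(V(t),Y_t)$ with $g(u,y)=p(V(\infty)-u,y,f^{-1}(v))/p(V(\infty),0,f^{-1}(v))$, combined with $dV(t)=\sigma^{2}(t)dt$, shows that the $dt$-term vanishes and $M$ is a local $\bbQ$-martingale. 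True martingality follows from the Chapman--Kolmogorov identity $\bbE^{\bbQ}[p(V(\infty)-V(t),Y_t,z)]=p(V(\infty),0,z)$, giving $\bbE^{\bbQ}[M_t]=1$ for every $t$ (so the nonnegative supermartingale $M$ is a martingale).

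Define $\bbQ^v$ by $d\bbQ^v/d\bbQ\big|_{\cF_t}=M_t$. Girsanov's theorem turns (\ref{e:genYtc}) under $\bbQ^v$ into
\[
dY_t=\sigma(t)\,d\beta^{v}_t+\sigma^{2}(t)\!\left[rY_t+\partial_y\log p\!\left(V(\infty)-V(t),Y_t,f^{-1}(v)\right)\right]dt .
\]
Using the explicit Gaussian form $p(s,y,z)=q\!\left(\tfrac{e^{2rs}-1}{2r},\,z-ye^{rs}\right)$ one gets $\partial_y\log p(s,y,z)=\frac{2re^{rs}(z-ye^{rs})}{e^{2rs}-1}$. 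Substituting $s=V(\infty)-V(t)$, i.e.\ $e^{2rs}=1+2re^{-2rt}$, a direct computation matches the bracket with $r\cdot \frac{f^{-1}(v)-Y_t\cosh\theta(t)}{\sinh\theta(t)}$ for $\theta(t)=\tfrac12\log(1+2re^{-2rt})$, since $\cosh\theta=\frac{1+re^{-2rt}}{\sqrt{1+2re^{-2rt}}}$ and $\sinh\theta=\frac{re^{-2rt}}{\sqrt{1+2re^{-2rt}}}$; this is the one unavoidable hands-on calculation and is the main technical obstacle (together with checking that the ratio $\sigma^{2}/\sinh\theta$ stays bounded as $t\to\infty$ so the drift does not blow up). Formula (\ref{e:genYMM}) is then just the Radon--Nikodym identity $\bbE^{\bbQ^v}[F]=\bbE^{\bbQ}[M_t F]$ for $\cF_t$-measurable bounded $F$.

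Weak existence is given by this construction. For pathwise uniqueness, observe that the diffusion coefficient $\sigma(t)$ is deterministic and bounded away from $0$ on compacts and that, on any $[0,T]$, the drift is affine (hence globally Lipschitz) in $Y_t$ with coefficients bounded in $t$; the same criteria cited in the proof of Theorem \ref{t:optY} (Lemma IX.3.3 and Proposition IX.3.2 of \cite{RY} and Corollary 5.3.23 of \cite{KS}) yield a unique strong solution. Finally, the limit $\lim_{t\to\infty}Y_t=f^{-1}(v)$, $\bbQ^v$-a.s., follows from the bridge structure: under $\bbQ^v$ the time-changed process $R_u$ is the OU $h$-transform pinning $R_{V(\infty)}=f^{-1}(v)$, so $Y_t=R_{V(t)}\to R_{V(\infty)}=f^{-1}(v)$ as $t\to\infty$.
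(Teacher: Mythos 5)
Your proof is correct and reaches the same underlying object (a Markov bridge / Doob $h$-transform of the time-changed Ornstein--Uhlenbeck reference diffusion), but you build it from scratch rather than citing an off-the-shelf result. The paper goes in the opposite direction: it \emph{starts} from the SDE in the statement, gets strong existence and uniqueness in one line from local Lipschitz continuity of the coefficients, time-changes via $R_u=Y_{V^{-1}(u)}$ to obtain a deterministic-horizon bridge SDE driven by a Brownian motion, and then identifies that SDE with the OU bridge pinned at $R_{V(\infty)}=f^{-1}(v)$ by matching it to Example~2.3 and Theorem~2.2 of \cite{CD2}; both the a.s.\ convergence and the Radon--Nikodym identity (\ref{e:genYMM}) are then quoted from that reference. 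What you do instead is exhibit the candidate density $M_t=p(V(\infty)-V(t),Y_t,f^{-1}(v))/p(V(\infty),0,f^{-1}(v))$, check it is a true $\bbQ$-martingale via the backward Kolmogorov equation and Chapman--Kolmogorov, apply Girsanov, and verify by the explicit hyperbolic-function computation that the resulting drift reproduces the stated SDE — i.e., you essentially re-derive the content of the cited theorem. Your computation of $\partial_y\log p$ and its match with $r\frac{f^{-1}(v)-y\cosh\theta}{\sinh\theta}$ is correct, and the observation that $\sigma^2/\sinh\theta$ stays bounded as $t\to\infty$ (in fact $\to 2$ when $C^2=2r$) is the right check for global well-posedness. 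Two small remarks: (i) you could obtain strong existence/uniqueness directly from the Lipschitz coefficients, as the paper does, without first building the weak solution via Girsanov; (ii) the $\bbQ^v$-a.s.\ convergence deserves one extra sentence — your change of measure only defines $\bbQ^v$ consistently on each $\cF_t$ (it is \emph{not} absolutely continuous on $\cF_\infty$, since $M_t\to 0$ $\bbQ$-a.s.), so the a.s.\ limit $Y_t\to f^{-1}(v)$ should be read as a property of the law of the unique strong solution on path space, obtained e.g.\ by transferring the known endpoint convergence of the time-changed OU bridge $R_u\to R_{V(\infty)}$ as $u\uparrow V(\infty)$, exactly as the paper does.
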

\begin{proof}
The existence and uniqueness of the solution follows immediately since the SDE has Lipschitz coefficients in every compact interval $[0,T]$.  

Next observe that
\[
\half \log (1+ 2r e^{-2rt})=r(V(\infty)-V(t)).
\]
Thus, if we define $R_t:=Y_{V^{-1}(t)}$, we obtain
\[
 R_t =\int_0^{V^{-1}(t)}\sigma(s)dB_s + r \int_0^t \frac{f^{-1}(v)- R_s \cosh\left(r(V(\infty)-s)\right)}{\sinh\left(r(V(\infty)-s)\right)}ds.
 \]
 On the other hand, $\beta_t:=\int_0^{V^{-1}(t)}\sigma(s)dB_s$ is a local martingale with respect to the filtration $(\cG_t)_{t \geq 0}$, where $\cG_t:=\cF_{V^{-1}(t)}$. Moreover, $[\beta,\beta]_t=t$ for each $t\geq 0$. Therefore, it is a $\cG$-Brownian motion. Consequently,
 \be \label{e:TCY}
 R_t =\beta_t + r \int_0^t \frac{f^{-1}(v)- R_s \cosh\left(r(V(\infty)-s)\right)}{\sinh\left(r(V(\infty)-s)\right)}ds, \qquad t <V(\infty).
 \ee
 However, the above is the SDE for 
 \[
 \rho_t= W_t +r \int_0^t \rho_s ds
 \]
 conditioned on the event $[\rho_{V(\infty)}=f^{-1}(v)]$. Indeed, the SDE representation of this Markovian bridge follows from Example 2.3 in \cite{CD2}, which coincides with the above SDE since $F(t)=e^{rt}$ and $\Sigma(s,t)=\frac{e^{2rt}-1}{2r}$, where $F$ and $\Sigma$ are the functions defined in Example 2.3 of \cite{CD2}. Therefore, $R_t \rar f^{-1}(v)$ as $t \rar V(\infty)$, which is equivalent to $Y_t \rar f^{-1}(v)$ as $t \rar \infty$.
 
 The absolute continuity relationship is a consequence of Theorem 2.2 in \cite{CD2} since  the solution of (\ref{e:genYtc}) is a Markov process  with transition density $p(V(t)-V(s),y,z)$.
\end{proof}
 We are now ready to state the existence of an equilibrium in the next theorem, whose proof is postponed to Appendix \ref{a:t}.
\begin{theorem} \label{t:geneq} Suppose $C^2= 2r$ and  assume that  there exist positive constants $K>0$ and $k<\frac{1}{1+2r}$ such that 
\[
|f(y)|\leq K e^{\frac{ky^2}{4}}.
\]
Define 
\[
g(t,y):=\int_{-\infty}^{\infty}f(z) p(V(\infty)-t,y,z)dz
\]
and let $h^*(t,y)=g(V(t),y)$. Then, $((\sigma^*, \phi^*,0), h^*, \alpha^*)$ is an equilibrium, where $\sigma^*$ is the positive square root of (\ref{e:sigma}) with $C^2=2r$, $\phi^*(y)=ry$, and 
\[
\alpha^*(t)=  r \sigma^*(t)\left(\frac{f^{-1}(\Gamma)- Y^*_t \cosh\left(\half \log (1+ 2r e^{-2rt})\right)}{\sinh\left(\half \log (1+ 2r e^{-2rt})\right)}- Y^*_t\right).
\]
Moreover, 
\be \label{e:Y*proj}
Y^*_t= \int_0^t \sigma^*(s)dB^*_s +r\int_0^t(\sigma^*(s))^2Y^*_s ds,
\ee
where $B^*$ is an $\cF^{Y^*}$-Brownian motion.
\end{theorem}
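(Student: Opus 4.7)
The plan is to verify the four conditions in the equilibrium definition and then derive the filtered dynamics (\ref{e:Y*proj}). First I would check that $((\sigma^*, \phi^*, 0), h^*)$ is an admissible pricing rule. The linear SDE in (\ref{e:Y*proj}) has Lipschitz coefficients on compacts, so global strong existence and uniqueness are standard, and $\sigma^*$ is bounded away from zero on compact time intervals. For $h^*$, the Feynman--Kac representation $g(t,y) = \bbE[f(\rho^y_{V(\infty)-t})]$, where $\rho^y$ is the Ornstein--Uhlenbeck process with generator (\ref{e:OUgen}) started from $y$, shows that $g$ is $C^{1,2}$, solves (\ref{e:gpde}), and is strictly increasing in $y$ (since the OU density in $y$ is a translate of a Gaussian and $f$ is strictly increasing). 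The growth hypothesis $|f(y)|\leq Ke^{ky^2/4}$ with $k(1+2r)<1$ is precisely what makes the Gaussian integral defining $g$ finite together with all its moments. By the time-change lemma preceding this theorem, $h^*(t,y)=g(V(t),y)$ then solves (\ref{e:ghrule}).

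Next I would verify that $\alpha^*$ is admissible. Theorem \ref{t:gbridge} already gives strong existence and uniqueness of $Y^*$ under $\bbQ^v$ together with $Y^*_t \to f^{-1}(v)$ a.s. The absolute continuity identity (\ref{e:genYMM}), combined with the Gaussian tail of $p$ and the growth control on $f$, yields
\[
\bbE^v\int_0^\infty e^{-rt}(h^*(t,Y^*_t))^2\,dt<\infty,
\]
which implies (\ref{adm:alpha}) after exchanging the $\tau$-integral with an exponential weight by independence of $\tau$ from $(B,\Gamma)$.

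To establish the martingale condition, I would argue as in Theorem \ref{t:optY}. Averaging (\ref{e:genYMM}) against $\nu$ and using that $Y^*_0=0$ identifies the unconditional law of $Y^*$ with $\bbQ$, and yields
\[
\bbP(\Gamma\in dv\,|\,\cF^{Y^*}_t) \;\propto\; p(V(\infty)-V(t),Y^*_t,f^{-1}(v))\,\nu(dv),
\]
so that $\bbE[\Gamma\,|\,\cF^{Y^*}_t]=g(V(t),Y^*_t)=h^*(t,Y^*_t)$. This is exactly the conclusion of Proposition \ref{p:PV}, identifying $S$ as a $\cG^M$-martingale; and by Theorem 8.1 of \cite{LS} the projected drift of $Y^*$ onto $\cF^{Y^*}$ is $r(\sigma^*)^2 Y^*$, which is (\ref{e:Y*proj}).

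The main obstacle will be the verification of optimality of $\alpha^*$. Using $J$ from Lemma \ref{l:gJ} with $h=h^*$ and applying Itô's formula to $e^{-rt}J(t,Y_t)$ under an arbitrary admissible $\alpha$, the PDE in that lemma collapses the dt-terms and yields
\[
e^{-rt}J(t,Y_t)=J(0,0)+\int_0^t e^{-rs}\,\frac{h^*(s,Y_s)-v}{\sigma^*(s)}\,\sigma^*(s)\,dB_s-\int_0^t e^{-rs}(v-h^*(s,Y_s))\alpha_s\,ds.
\]
The stochastic integral is a true martingale thanks to (\ref{adm:alpha}), so taking expectations, letting $t\to\infty$, and using $J\geq 0$ together with Fatou gives $\bbE^v W_\tau\leq J(0,0)$ for every admissible $\alpha$, with equality precisely when the transversality $\bbE^v e^{-rt}J(t,Y_t)\to 0$ holds. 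This is the delicate point: one has to show that under the bridge law $\bbQ^v$, where $Y^*_t$ concentrates near $f^{-1}(v)$ and $h^*(t,Y^*_t)\to v$, both pieces of (\ref{e:gJ}) vanish in $L^1$. The finiteness $V(\infty)<\infty$ makes the tail integral in (\ref{e:gJ}) decay geometrically in $e^{-rt}$, while the first piece is controlled via (\ref{e:genYMM}) and the Gaussian estimate on $p$; here the growth bound $k<1/(1+2r)$ is used once more to keep the relevant Gaussian moments finite.
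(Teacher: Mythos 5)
Your proposal is correct and follows essentially the same roadmap as the paper's two-step proof (insider's optimality via the value function $J$, then market makers' pricing). The structure matches: admissibility of the pricing rule from Lipschitz coefficients and Feynman--Kac, admissibility and optimality of $\alpha^*$ via the It\^o decomposition of $e^{-rt}J(t,Y_t)$ and the transversality condition driven by the bridge convergence of Theorem~\ref{t:gbridge}, and the martingale/rationality condition from the absolute-continuity identity~(\ref{e:genYMM}). The one place where you take a genuinely different (and arguably cleaner) route is the martingale condition: you average~(\ref{e:genYMM}) over $\nu$, note that $\nu\circ f$ equals $p(V(\infty),0,\cdot)\,dz$ precisely because $C^2=2r$ makes $V(\infty)$ yield a unit-variance marginal, conclude that the mixture likelihood ratio $d\bbP/d\bbQ|_{\cF^{Y^*}_t}\equiv 1$, and then read off the Bayes posterior $\bbP(\Gamma\in dv\,|\,\cF^{Y^*}_t)\propto p(V(\infty)-V(t),Y^*_t,f^{-1}(v))\nu(dv)$, giving $\bbE[\Gamma|\cF^{Y^*}_t]=h^*(t,Y^*_t)$ directly. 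The paper instead compares finite-dimensional conditional laws to show $Y^*\stackrel{d}{=}R_{V(\cdot)}$ and deduces the martingale property from there; the two routes are equivalent, but your Bayes framing makes the role of $C^2=2r$ more transparent. The only real gap between your sketch and the paper is that you assert rather than execute the Gaussian moment estimates: the paper devotes most of Step 1 to bounds of the form~(\ref{e:hybound}), (\ref{e:L1bound}), (\ref{e:hbound}) and to proving $\sup_t\bbE^v|Y^*_t|^{2+\eps}<\infty$ and $\sup_t\bbE^v|h^*(t,Y^*_t)|^{2+\eps}<\infty$ using~(\ref{e:genYMM}) and the hypothesis $k<1/(1+2r)$, which is what makes both the uniform integrability for the true-martingale claim and the $L^1$ transversality $\bbE^v e^{-rt}J(t,Y_t)\to 0$ work; you correctly flag where $k<1/(1+2r)$ is used, but a full proof would need those explicit bounds.
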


As in the time-homogeneous case, Kyle's lambda will be a uniformly integrable supermartingale. However, contrary to the time-homogeneous case it won't disappear as $t \rar \infty$, i.e. it won't be a potential in general.
\begin{proposition} Consider the equilibrium given in Theorem \ref{t:geneq} and define $\lambda^*_t:=h^*_y(t,Y^*_t)$. Then, $\lambda^*$ is a uniformly integrable $(\cF^{Y^*}, \bbP)$-supermartingale and
\be \label{e:limitimpact}
\lim_{t \rar \infty}\bbE[\lambda^*_t]=\int_{-\infty}^{\infty}f'(z)\frac{1}{\sqrt{2\pi}}e^{-\frac{z^2}{2}}dz.
\ee
\end{proposition}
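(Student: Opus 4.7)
The approach is to derive an SDE for $\lambda^*$ by differentiating the PDE satisfied by $g$ and applying It\^o's formula (which immediately yields the supermartingale claim), then to compute $\bbE[\lambda^*_t]$ in closed form via the Gaussian marginal of $Y^*$ in $\cF^{Y^*}$. A Scheff\'e-type argument built on this closed form then simultaneously delivers both uniform integrability and the limit (\ref{e:limitimpact}).

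Differentiating the PDE (\ref{e:gpde}) in $y$ shows that $u:=g_y$ solves $u_t+\half u_{yy}+ry\,u_y+ru=0$. Applying It\^o's formula to $\lambda^*_t=u(V(t),Y^*_t)$ with $V'(t)=(\sigma^*(t))^2$ and the dynamics (\ref{e:Y*proj}) then yields
\[
d\lambda^*_t=-r(\sigma^*(t))^2\lambda^*_t\,dt+\sigma^*(t)\,g_{yy}(V(t),Y^*_t)\,dB^*_t.
\]
Strict monotonicity of $f$ is preserved by the OU Markov semigroup, so $g(t,\cdot)$ is strictly increasing and $\lambda^*>0$. A nonnegative local supermartingale is a supermartingale (Fatou), which establishes the first claim.

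For the explicit computation, set $T_t:=V(\infty)-V(t)$ and $A_t:=(e^{2rT_t}-1)/(2r)$. From $g(V(t),y)=\bbE[f(ye^{rT_t}+\sqrt{A_t}\,Z)]$ with $Z\sim N(0,1)$ independent, differentiation under the integral (justified by the growth bound on $f$ with $k<1/(1+2r)<1$) gives
\[
\lambda^*_t=e^{rT_t}\,\bbE\big[f'(Y^*_t e^{rT_t}+\sqrt{A_t}\,Z)\,\big|\,\cF^{Y^*}_t\big].
\]
In $\cF^{Y^*}$ under $\bbP$, $Y^*$ is a time-change of the mean-zero OU process with generator (\ref{e:OUgen}) ($d=0$) started at $0$, so $Y^*_t\sim N\big(0,(e^{2rV(t)}-1)/(2r)\big)$. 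The tuning $C^2=2r$ forces $e^{2rV(\infty)}=1+2r$, and the variance identity
\[
\frac{(e^{2rV(t)}-1)\,e^{2rT_t}}{2r}+A_t=\frac{e^{2rV(\infty)}-1}{2r}=1
\]
gives $W_t:=Y^*_t e^{rT_t}+\sqrt{A_t}\,Z\sim N(0,1)$ for every $t\geq 0$ by independence of $Y^*_t$ and $Z$. Taking expectation and invoking Stein's identity then yields
\[
\bbE[\lambda^*_t]=e^{rT_t}\int_{-\infty}^{\infty}f'(z)\,\frac{1}{\sqrt{2\pi}}\,e^{-z^2/2}\,dz.
\]

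For uniform integrability, set $M_t:=e^{rV(t)}\lambda^*_t$; the SDE above shows $M$ is a nonnegative local martingale, and the closed form just derived gives $\bbE[M_t]=e^{rV(\infty)}\int f'(z)(2\pi)^{-1/2}e^{-z^2/2}\,dz$ constant in $t$, so $M$ is a genuine martingale. Rewriting $M_t=e^{rV(\infty)}\bbE[f'(W_t)\,|\,\cF^{Y^*}_t]$ and using Theorem \ref{t:gbridge} ($Y^*_t\rar f^{-1}(\Gamma)=:\eta$ $\bbP$-a.s.) together with $A_t\rar 0$ shows $W_t\rar\eta$ a.s. Because $W_t\sim N(0,1)$ for every $t$, Scheff\'e's lemma applied to the nonnegative sequence $f'(W_t)$ gives $L^1$-convergence $f'(W_t)\rar f'(\eta)$; combined with the standard martingale convergence of $\bbE[f'(\eta)\,|\,\cF^{Y^*}_t]\rar f'(\eta)$ (since $\eta=Y^*_\infty\in\cF^{Y^*}_\infty$), this transfers to $L^1$-convergence of $M$, and hence (using the boundedness of $e^{\pm rV(t)}$) to uniform integrability of $\{\lambda^*_t\}_{t\geq 0}$. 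Sending $t\rar\infty$ in the formula for $\bbE[\lambda^*_t]$ finally produces (\ref{e:limitimpact}). The main obstacle is the variance cancellation underlying $W_t\sim N(0,1)$: it hinges crucially on the specific tuning $C^2=2r$ and is precisely what powers both the closed-form computation of $\bbE[\lambda^*_t]$ and the Scheff\'e step.
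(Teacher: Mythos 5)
Your proof is correct and follows essentially the same route as the paper's: you derive the same SDE $d\lambda^*_t=-r(\sigma^*(t))^2\lambda^*_t\,dt + \sigma^*(t)h^*_{yy}(t,Y^*_t)\,dB^*_t$ to get the (local) supermartingale property, and your ``variance identity'' forcing $W_t\sim N(0,1)$ is precisely the Chapman--Kolmogorov identity $\int p(V(t),0,y)\,p(V(\infty)-V(t),y,z)\,dy=p(V(\infty),0,z)$ that the paper invokes to compute $\bbE[\lambda^*_t]=\sqrt{1+2re^{-2rt}}\int f'(z)p(V(\infty),0,z)\,dz$. Your detour through the martingale $M_t=e^{rV(t)}\lambda^*_t$ and the Scheff\'e argument at the level of $f'(W_t)$ is a mild repackaging of the paper's more direct observation that a nonnegative supermartingale whose mean converges to the mean of its a.s.\ limit is automatically uniformly integrable; both hinge on the same closed form and on $Y^*_\infty$ being standard normal. (Minor quibble: what you call ``Stein's identity'' is just evaluating $\bbE[f'(W_t)]$ with $W_t\sim N(0,1)$, no integration-by-parts identity is used.)
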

\begin{proof}
Define $u(t,y)=h^*_y(t,y)$. Differentiating 
\[
h^*_t + \frac{\sigma^2(t)}{2}h^*_{yy} +\sigma^2(t)ryh^*_y=0
\]
with respect to $y$ yields
\[
u_t + \frac{\sigma^2(t)}{2}u_{yy} +\sigma^2(t)ryu_y=-r\sigma^2(t) u.
\]
Applying Ito formula to $u$ and $Y^*$, which satisfies (\ref{e:Y*proj}), yields
\[
d\lambda^*_t =u_y(t,Y^*_t)\sigma^*(t)dB^*_t -r(\sigma^*(t))^2\lambda^*_tdt.
\]
Since $\lambda^*\geq 0$, the stochastic integral in the above decomposition is a supermartingale, which leads to the desired supermartingale property of $\lambda^*$. 

It follows from (\ref{e:h*fk}) that
\[
u(t,y)=\sqrt{1+2re^{-2rt}}\int_{-\infty}^{\infty} f'(z)p(V(\infty)-V(t),y,z)dz.
\]
Thus, $\lambda^*_{\infty}:=\lim_{t \rar \infty}\lambda^*_t=f'(Y^*_{\infty}), \, \bbP$-a.s. and 
\[
\bbE[u(t,Y^*_t)]=\sqrt{1+2re^{-2rt}}\int_{-\infty}^{\infty} f'(z)p(V(\infty),0,z)dz
\]
by Chapman-Kolmogorov equation. Therefore,
\[
\lim_{t \rar \infty} \bbE[u(t,Y^*_t)]=\int_{-\infty}^{\infty} f'(z)p(V(\infty),0,z)dz=\bbE[f'(Y^*_{\infty})]=\bbE[\lambda^*_{\infty}].
\]
This implies $\lambda^*$ is a uniformly integrable supermartingale. (\ref{e:limitimpact}) follows from the fact that $Y^*_{\infty}$ is standard normal.
\end{proof}
\begin{remark} As in Corollary \ref{c:incons} one can show that the insider's trades are inconspicuous, i.e. $X^*$ is a Brownian motion in its own filtration. This directly follows from the equilibrium level of $Y^*$, which satisfies (\ref{e:Y*proj}).
\end{remark}
\section{Conclusion} \label{s:conc}
Using tools from potential theory of one-dimensional diffusions we have solved a version of the Kyle model with general payoffs when the announcement date has an exponential distribution and is independent of all other parameters of the model. It is shown that a stationary equilibrium exists only if the payoff has a Bernoulli distribution, which corresponds to the special case considered in \cite{BB}. The approach considered herein is novel in its study and characterisation of the optimal strategies of the insider in terms of excessive functions of an associated diffusion process. In particular the so-called Kyle's lambda, which is a measurement of liquidity, is identified with a {\em potential}. 

As in the earlier literature on the Kyle model we have shown that the total demand for the asset in equilibrium has the same distribution as that of the noise traders, i.e. the insider's trades are inconspicuous. What is different from the earlier models, however, is that the equilibrium prices no longer converge to the payoff, $\Gamma$, as the time approaches to the announcement date. That is, there is a jump in the price when $\Gamma$ becomes public knowledge. This is due to the fact that the announcement comes as a surprise even for the insider and, therefore, she is not able to construct a bridge of random length $\tau$ for the demand process in order for the prices to converge to $\Gamma$ (cf. bridge construction in \cite{GBP}). However, in equilibrium she trades in such a way that the price process conditioned on {\em no announcement}, i.e. $P$, converges to $\Gamma$.

It will be interesting to see how these conclusions, especially the last one, change if the announcement date, $\tau$, is no longer assumed to be independent of the other variables. However, this would require a framework beyond the scope of the current paper and is left for future study.

\appendix

\section{Proof of Proposition \ref{p:PV}} \label{a:p}
Recall that we are searching for a decomposition 
\[
S_t= P_t \chf_{[t<\tau]} + \Gamma \chf_{[t \geq \tau]},
\]
where $P$ is a semimartingale adapted to $\cF^X$.
 In order for $P$ to be a candidate price process on the set $[t<\tau]$ one  must have that $S$ is a $\cG^M$-martingale. To this end the following lemma will be crucial. 

\begin{lemma} \label{l:pen} Define
	\bean
	N_t&:=& \Gamma \chf_{[t \geq \tau]}-r\int_0^t \chf_{[s<\tau]}\bbE[\Gamma|\cF^X_s]ds,\\
	M_t&:=&\chf_{[t \geq \tau]}-r\int_0^t \chf_{[s<\tau]}ds
	\eean
	Then, $N$ and $M$ are $\cG^M$-martingales. 
\end{lemma}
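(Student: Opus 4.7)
The plan is to verify the martingale property directly from the definition, splitting according to whether $\tau \leq s$ or $\tau > s$, and exploiting the independence of $\tau$ from $\cF_\infty$ together with the memoryless property of the exponential distribution. Integrability is immediate: $|M_t|\leq 1+rt$ and $|N_t|\leq |\Gamma| + rt\sup_{u\leq t}\bbE[|\Gamma|\,|\cF^X_u]$, both of which lie in $L^1$ thanks to (\ref{a:GamInt}) and Doob's inequality applied to the $\cF^X$-martingale $\bbE[\Gamma|\cF^X_\cdot]$.

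The first step is to pin down $\cG^M_s$ on $\{\tau>s\}$. By definition $\cG^M_s$ is generated (up to $\bbP$-null sets) by $\cF^X_s$ together with $(\chf_{[\tau>u]})_{u\leq s}$ and $(\chf_{[u\geq\tau]}\Gamma)_{u\leq s}$. On $\{\tau>s\}$ the first additional family is identically one and the second identically zero, so beyond identifying the event $\{\tau>s\}$ itself they contribute no further information on this event. Therefore every $B\in \cG^M_s$ admits an $A_1\in \cF^X_s$ with $B\cap\{\tau>s\}=A_1\cap\{\tau>s\}$. On the complementary event $\{\tau\leq s\}$ both $M_t-M_s$ and $N_t-N_s$ vanish identically, since $\chf_{[t\geq\tau]}-\chf_{[s\geq\tau]}=0$ and $\chf_{[u<\tau]}=0$ for $u\geq s\geq \tau$. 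Thus the martingale identity reduces to showing that $\bbE[(M_t-M_s)\chf_{A_1\cap\{\tau>s\}}]=0$ and $\bbE[(N_t-N_s)\chf_{A_1\cap\{\tau>s\}}]=0$ for every $A_1\in \cF^X_s$.

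For $M$, independence of $\tau$ from $\cF^X_\infty$ together with Fubini give
\[
\bbE[\chf_{[s<\tau\leq t]}\chf_{A_1}]=\bbP(A_1)(e^{-rs}-e^{-rt})
\]
and
\[
r\bbE\Bigl[\int_s^t \chf_{[u<\tau]}\chf_{A_1}\,du\Bigr]=r\bbP(A_1)\int_s^t e^{-ru}\,du=\bbP(A_1)(e^{-rs}-e^{-rt}),
\]
so the two contributions cancel. For $N$, independence of $\tau$ from $\sigma(\Gamma)\vee\cF^X_\infty$ (a consequence of independence from $\cF_\infty$) yields
\[
\bbE[\Gamma \chf_{[s<\tau\leq t]}\chf_{A_1}]=\bbE[\Gamma\chf_{A_1}](e^{-rs}-e^{-rt}),
\]
while Fubini plus the inclusion $A_1\in \cF^X_s\subseteq\cF^X_u$ for $u\geq s$ convert the compensator term into
\[
r\int_s^t e^{-ru}\bbE[\chf_{A_1}\bbE[\Gamma|\cF^X_u]]\,du=r\int_s^t e^{-ru}\bbE[\chf_{A_1}\Gamma]\,du=\bbE[\Gamma\chf_{A_1}](e^{-rs}-e^{-rt}),
\]
and again the two contributions cancel.

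The only nontrivial ingredient is the structural description of $\cG^M_s$ restricted to $\{\tau>s\}$; once this is granted, every other step is routine and amounts to pushing the independence of $\tau$ through the relevant conditional expectations. As a sanity check, the same conclusion can be recovered from the general theory of progressive enlargements: the Azéma supermartingale of $\tau$ in $\cF^X$ is $e^{-rt}$, so $r(t\wedge\tau)$ is the $\cF^X$-dual predictable projection of $\chf_{[t\geq\tau]}$, and the further enlargement by $(\chf_{[t\geq\tau]}\Gamma)_{t\geq 0}$ preserves the martingale property of $M$ and $N$ because $\Gamma$ is independent of $\tau$.
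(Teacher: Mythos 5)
Your proof is correct and follows essentially the same route as the paper: split on $\{\tau\leq s\}$ (where the increments of $M$ and $N$ vanish pointwise) and on $\{\tau>s\}$ (where independence of $\tau$ from $\cF^X_\infty\vee\sigma(\Gamma)$ lets you compute directly), with the cancellation $e^{-rs}-e^{-rt}=r\int_s^t e^{-ru}\,du$ driving both martingale identities. The paper performs this implicitly inside conditional expectations, whereas you reduce to test sets $A_1\in\cF^X_s$ via the key-lemma description of $\cG^M_s$ on $\{\tau>s\}$; that is a more explicit justification of the same step, not a different argument.
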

\begin{proof}
	Note that for $s<t$ 
	\bean
	\bbE[N_t|\cG^M_s]&=&\chf_{[\tau\leq s]}\bbE\left[\Gamma-r\int_0^{\tau}\bbE[\Gamma|\cF^X_u]du\bigg|\cG^M_s\right]+\chf_{[\tau>s]}\bbE\left[\Gamma \chf_{[t \geq \tau]}-r\int_0^t \chf_{[u<\tau]}\bbE[\Gamma|\cF^X_u]du\bigg|\cG^M_s\right]\\
	&=&\chf_{[\tau\leq s]}N_s +\chf_{[\tau>s]}\left(\bbE[\Gamma|\cF^X_s](1-e^{-r(t-s)})-r\int_0^s \bbE[\Gamma|\cF^X_u]du\right)\\
	&&-\chf_{[\tau>s]}\bbE\left[\int_s^t r \chf_{[u<\tau]}\bbE[\Gamma|\cF^X_u]du\bigg|\cG^M_s\right]\\
	&=&\chf_{[\tau\leq s]}N_s +\chf_{[\tau>s]}\left(\bbE[\Gamma|\cF^X_s](1-e^{-r(t-s)})-r\int_0^s \bbE[\Gamma|\cF^X_u]du\right)\\
	&&-\chf_{[\tau>s]}\int_s^t r e^{-r(u-s)}\bbE[\Gamma|\cF^X_s]du\\
	&=&N_s,
	\eean
	where the second and third equalities are due to the independence of $X$ and $\tau$. 
	
	The proof for the martingale property of $M$ follows the similar lines.
\end{proof}

In view of the above lemma in order for $S$ to be a $\cG^M$-martingale one needs to show that
\be \label{def:U}
U_t:=P_t \chf_{[t<\tau]} + r\int_0^t \chf_{[s<\tau]}\bbE[\Gamma|\cF^X_s]ds
\ee
is a $\cG^M$-martingale. This leads to 

\begin{proof}[Proof of Proposition \ref{p:PV}] Suppose that the semimartingale decomposition of $P$ is given by $P_t=P_0+Z_t +A_t$ where $Z$ is a continuous local martingale and $A$ is a predictable process of finite variation\footnote{Under the assumption that all martingales are continuous the optional and the predictable $\sigma$-algebras coincide.}. Let 
	\[
	U_t=P_t \chf_{[t<\tau]} + r\int_0^t \chf_{[s<\tau]}\bbE[\Gamma|\cF^X_s]ds.
	\]
	Then, in view of the previous lemma we have
	\bean
	dU_t&=&\chf_{[t \leq \tau]}\{dZ_t +dA_t\} -P_{t-}\{dM_t +r \chf_{[t<\tau]}dt\} + r \chf_{[t<\tau]}\bbE[\Gamma|\cF^X_t]dt -\Delta A_{\tau} \chf_{[t =\tau]}\\
	&=&\chf_{[t \leq \tau]}dZ_t-P_{t-}dM_t+ \chf_{[t <\tau]}dA_t +r\chf_{[t<\tau]}\left(\bbE[\Gamma|\cF^X_t]-P_t\right)dt.
	\eean
	Consequently,
	\[
	\int_0^t \left\{\chf_{[s<\tau]}dA_s +r\chf_{[s<\tau]}\left(\bbE[\Gamma|\cF^X_s]-P_s\right)ds\right\}
	\]
	is a predictable local martingale of finite variation. Thus, 
	\[
	\chf_{[t <\tau]}dA_t=\chf_{[t <\tau]}r(P_t -\bbE[\Gamma|\cF^X_t])dt.
	\]
	Since $\tau$ is independent of $\cF^X$ and $\bbP(\tau>t)>0$ for all $t\geq 0$, this yields
	\[
	dA_t=r(P_t -\bbE[\Gamma|\cF^X_t])dt.
	\]
	Thus, $A$ is continuous and so is $P$. Consider 
	\[
	\tau_n :=\inf\{t\geq 0: |Z_t|>n\}.
	\]
	Since $Z$ is continuous, $\tau_n \rar \infty$, $\bbP$-a.s..  
	
	Next, let $\hat{\Gamma}_t=\bbE[\Gamma|\cF^X_t]$ and consider $\sigma_n:=\inf\{t\geq 0: |P_t-\hat{\Gamma}_t|>n\}$, which converges to $\infty$ due to the continuity of $P$ and $\hat{\Gamma}$.  Then, 
	\[
	P_{t\wedge \tau_n\wedge \sigma_m}-\hat{\Gamma}_{t\wedge\tau_n\wedge \sigma_m} -r \int_0^{t\wedge \tau_n\wedge \sigma_m} (P_s-\hat{\Gamma}_s)ds
	\]
	is a  martingale for each $n$ and and $m$. Moreover, denoting the semimartingale local time of $P-\hat{\Gamma}$ at $0$ by $L$, we deduce that
	\be \label{e:P-Gdecomp}
	\left|P_{t\wedge \tau_n\wedge \sigma_m}-\hat{\Gamma}_{t\wedge \tau_n\wedge \sigma_m}\right|- r \int_0^{t\wedge \tau_n\wedge \sigma_m} \left|P_u-\hat{\Gamma}_u\right|du- {L}_{t\wedge \tau_n\wedge \sigma_m}
	\ee
	is a $\cG$-local martingale and, thus a submartingale being bounded from above. 
	
	Also, note that since $S$ is a uniformly integrable martingale as observed on the proof of Proposition \ref{p:SUI}, $\chf_{[\tau>t]}|P_t|\leq S_t$, and $\tau$ is independent of $P$, $(P_{t\wedge \tau_n\wedge \sigma_m})_{ n\geq1, m \geq 1}$ is uniformly integrable. Thus,  we obtain for $s<t$
	\[
	\bbE\left[|P_t-\hat{\Gamma}_t|\big| \cF^X_s\right] \geq  |P_s-\hat{\Gamma}_s| + r \int_s^t \bbE\left[|P_u-\hat{\Gamma}_u|\big|\cF^X_s\right]du \]
	after taking limits as $n \rar \infty $ and $m \rar \infty$ and utilising the monotone convergence theorem on the integral in (\ref{e:P-Gdecomp}) as well as the fact that ${L}$ is increasing.
	
	A straightforward application of Gronwall's inequality, therefore,  implies that for any $t >s$
	\be \label{e:gronwall}
	\bbE\left[\left|P_t-\hat{\Gamma}_t\right|\big| \cF^X_s\right] \geq \left|P_s-\hat{\Gamma}_s\right|e^{r(t-s)}.
	\ee
	On the other hand,
	\bean
	\chf_{[\tau>s]}\bbE\left[\left|S_t-\Gamma\right| \big| \cG^M_s\right]&=&\chf_{[\tau>s]}\bbE\left[\chf_{\tau>t}\left|P_t-\Gamma\right| \big| \cG^M_s\right]\\
	&=&\chf_{[\tau>s]}e^{-r(t-s)}\bbE\left[\left|P_t-\Gamma\right| \big| \cF^X_s\right]\geq \chf_{[\tau>s]}e^{-r(t-s)}\bbE\left[\left|P_t-\hat{\Gamma}_t\right| \big| \cF^X_s\right],
	\eean
	where the last inequality follows from Jensen's inequality since $P_t$ is $\cF^X_t$-measurable. 
	
	However, (\ref{e:gronwall}) then yields
	\[
	\lim_{t \rar \infty} \chf_{[\tau>s]}\bbE\left[\left|S_t-\Gamma\right|\big| \cG^M_s\right]\geq \lim_{t \rar \infty}\left|P_s-\hat{\Gamma}_s\right|=\left|P_s-\hat{\Gamma}_s\right|,
	\]
	which contradicts (\ref{e:S-Glimit}) unless $P_s =\hat{\Gamma}_s$, $\bbP$-a.s.. Since $P$ and $\hat{\Gamma}$ are continuous, the null set can be chosen to be independent of $s$, which completes the proof.
\end{proof}
\section{Proof of Theorem \ref{t:geneq}} \label{a:t}
	We will show that $((\sigma^*, \phi^*,0), h^*, \alpha^*)$ is an equilibrium by checking 1) $\alpha^*$ is admissible and optimal given $(\sigma^*, \phi^*,0)$ and 2) $(\sigma^*, \phi^*,0)$ is an admissible pricing rule given $\alpha^*$.
	\begin{itemize}
		\item[Step 1.] {\em Insider's optimality.} In view of Lemma \ref{l:gJ} let us first verify that 
		\[
		\int_t^{\infty}e^{-rs}\sigma^*(s)h^*_y(s,h^{-1}(s,v))ds<\infty, \; \forall t\geq 0.
		\]
		To see this first observe that $h^*_y(s,h^{-1}(s,v))=\frac{1}{\frac{d h^{-1}(s,y)}{dy}}$ and 
		\bea
		y&=&\int_{-\infty}^{\infty}f(z)p(V(\infty)-V(t),h^{-1}(t,y),z)dz \nn \\
		&=&\int_{-\infty}^{\infty}f(z)p(\frac{1}{2r}\log(1+2re^{-2rt}), h^{-1}(t,y),z)dz\nn \\
		&=&\int_{-\infty}^{\infty}f(z)q(e^{-2rt}, z-h^{-1}(t,y)\sqrt{1+2re^{-2rt}})dz. \label{e:h*fk}
		\eea
		Due to the bound on $f$ we can differentiate inside the integral sign to get
		\bea
		0&< &\frac{1}{\frac{d h^{-1}(s,y)}{dy}}\nn \\
		&=&\sqrt{1+2re^{-2rt}}\int_{-\infty}^{\infty}f(z)\frac{z-h^{-1}(t,y)\sqrt{1+2re^{-2rt}}}{e^{-2rt}} q(e^{-2rt}, z-h^{-1}(t,y)\sqrt{1+2re^{-2rt}})dz \nn \\
		&=&\sqrt{1+2re^{-2rt}}\int_{-\infty}^{\infty}f(z+h^{-1}(t,y)\sqrt{1+2re^{-2rt}})\frac{z}{e^{-2rt}} q(e^{-2rt}, z)dz \nn \\
		&\leq& C \exp\left(\frac{k}{2}(h^{-1}(t,y))^2(1+2r)\right)e^{3rt}\int_0^{\infty}\frac{z}{\sqrt{2\pi}}\exp\left(-\frac{z^2}{2}(e^{2rt}-k)\right)dz \nn \\
		&=&C \exp\left(\frac{k}{2}(h^{-1}(t,y))^2(1+2r)\right)\frac{e^{3rt}}{e^{2rt}-k}\nn\\
			&\sim&C \exp\left(\frac{k}{2}(h^{-1}(t,y))^2(1+2r)\right)e^{rt} \mbox{ as } t\rar \infty, \label{e:hybound}
		\eea
		where $C$  in above (and also throughout the proof) is a constant, independent of $t$, that might change from line to line. Moreover, $h^{-1}(t,y)= g^{-1}(V(t), y)$ is bounded for fixed $y$ since $V(\infty)<\infty$ and $g(,\cdot, y)$ is strictly increasing and continuous on $[0,V(\infty)]$ for each $y$. Thus, the above asymptotics  establishes that the condition is verified since 
		\[
		\int_0^{\infty}\sigma^*(s)<\infty.
		\]
		Thus, for any admissible $\alpha$ we have
		\[
		e^{-rt}J(t,Y_t)=J(0,0) + \int_0^t e^{-rs} (h(s,Y_s) -\Gamma)\left\{dB_s +\alpha_s ds\right\}
		\]
		due to the fact that $J_t +AJ -rJ=0$ by Lemma \ref{l:gJ}.  In view of the admissibility condition (\ref{adm:alpha}) 
		\[
		\left(\int_0^t e^{-rs} (h(s,Y_s) -\Gamma)dB_s\right)_{t \geq 0}
		\]
		is a uniformly integrable martingale martingale that converges in $L^1(d\bbP^v)$. Thus,  if $e^{-rt}J(t,Y_t)$ has a limit in $L^1(d\bbP^v)$ as $ t \rar \infty$, then
		\[
		\bbE^v \int_0^{\infty}  e^{-rs} (h(s,Y_s) -\Gamma)\alpha_sds=J(0,0)-\bbE^v \lim_{t \rar \infty}e^{-rt}J(t,Y_t). 
		\]
		Since $J\geq 0$, $\alpha$ will be the optimal strategy if it achieves $\lim_{t \rar \infty}e^{-rt}J(t,Y_t)=0$. To see that the above limit holds in $L^1(d\bbP^v)$,   first note that 
		\bea
		e^{-rt} \frac{1}{\sigma^*(t)}\int_{h^{-1}(t,v)}^y(h(t,x)-v)dx&\leq& \frac{\sqrt{1+2r}}{\sqrt{2r}}\int_{h^{-1}(t,v)}^y(h(t,x)-v)dx \nn \\
		&\leq& \frac{\sqrt{1+2r}}{\sqrt{2r}}(h(t,y)-v)(y-h^{-1}(t,v)) \label{e:L1bound}
		\eea
		due to (\ref{e:sigma}) and the fact that $h$ is increasing in $y$. Thus, $\lim_{t \rar \infty} e^{-rt}J(t,Y_t)=0,\, \bbP^v$-a.s. if $Y_t - h^{-1}(t,v)\rar 0,\, \bbP^v$-a.s.. However, Theorem \ref{t:gbridge} shows that $\alpha^*$ makes  $\lim_{t \rar \infty}Y^*_t=f^{-1}(v)=\lim_{t \rar \infty}h^{-1}(t,v), \, \bbP^v$-a.s.. Thus, if $\sup_{t \geq 0}\bbE^v |h^*(s,Y^*_s)|^{2+\eps}+ \bbE^v |Y^*_t|^{2+\epsilon}<\infty$, for some $\epsilon>0$, we can conclude that $e^{-rt}J(t,Y_t)$ converges to $0$ in   $L^1(d\bbP^v)$ as $t \rar \infty$ in  view of (\ref{e:L1bound}). Note that $\sup_{t \geq 0}\bbE^v (h^*(s,Y^*_s))^2<\infty$ will also imply that $\alpha^*$ is admissible. 
		
		We shall first show that $\bbE^v (Y^*_t)^{2+\eps}$ is bounded. Indeed, in view of the absolute continuity relationship established in Theorem \ref{t:gbridge}, we have
		\bean
		\bbE^v|Y^*_t|^{2+\eps} &=&\bbE^{\bbQ}\left[\frac{|Y^*_t|^{2+\eps}p(V(\infty)-V(t), Y_t^*,f^{-1}(v))}{p(V(\infty),0,f^{-1}(v))}\right]\\
		&=&\frac{1}{\sqrt{2\pi}p(V(\infty),0,f^{-1}(v))}\int_{-\infty}^{\infty}e^{rt}|y|^{2+\eps}p(V(\infty)-V(t),y,f^{-1}(v)) \exp\left(-\frac{y^2}{2  e^{-2rt}}\right)dy \\
		&\leq&C e^{-rt}\int_{-\infty}^{\infty}|y|^{\eps}p(V(\infty)-V(t),y,f^{-1}(v)) dy\\
		&\leq& C.
		 \eean
		   In above, the third line follows from the boundedness of $xe^{-x}$ on $(0,\infty)$ and the last line is due to the finiteness of $V(\infty)$.
		 
		To show $\sup_{t \geq 0}\bbE^v |h^*(s,Y^*_s)|^{2+\eps}<\infty$ first note that 
		\bea
		|h(t,y)|&\leq& C\exp\left(\frac{k}{2}y^2(1+2r)\right)\int_{-\infty}^{\infty} \frac{e^{rt}}{\sqrt{2\pi}}\exp\left(-\frac{z^2}{2}(e^{2rt}-k)\right)dz \nn\\
		&\leq&C\exp\left(\frac{k}{2}y^2(1+2r)\right)\frac{e^{rt}}{\sqrt{e^{2rt}-k}}\leq C\exp\left(\frac{k}{2}y^2(1+2r)\right),\label{e:hbound}
		\eea
		in view of the exponential bound on $f$ and  similar arguments that led to (\ref{e:hybound}).
		
		Therefore, setting $k^{\eps}=k (1+ \epsilon/2)$,
		 \bean
		\bbE^v |h^*(s,Y^*_s)|^{2+\eps}&\leq& C \int_{-\infty}^{\infty}\exp\left(k^{\eps}z^2(1+2r)\right)q\left(e^{-2rs},f^{-1}(v)-z\sqrt{1+2re^{-2rs}}\right)q\left(\frac{1}{1+2re^{-2rs}},z\right)dz\\
		&\leq & C \int_{-\infty}^{\infty}\exp\left(z^2\left\{(1+2r)k^{\eps}-\frac{1+2re^{-2rs}}{2}\right\}\right)q\left(e^{-2rs},f^{-1}(v)-z\sqrt{1+2re^{-2rs}}\right)dz\\
		&\leq&C \int_{-\infty}^{\infty}\exp\left(z^2e^{-2rs}\left\{\frac{(1+2r)k^{\eps}}{1+2re^{-2rs}}-\frac{1}{2}\right\}\right)\exp\left(-\frac{(f^{-1}(v)e^{rs}-z)^2}{2}\right)dz\\
		&\leq& C \exp\left((f^{-1}(v))^2\frac{m(s)}{{1-2e^{-2rs}m(s)}}\right),
		\eean
		where
		\[
		m(s):=\frac{k^{\eps}(1+2r)}{1+2re^{-2rs}}-\half<\half,
		\]
		if $\epsilon>0$ is chosen small enough. This completes the proof of that 	$\bbE^v |h^*(s,Y^*_s)|^{2+\eps}$ is bounded. Hence, $\alpha^*$ is an admissible optimal strategy.
		
		\item[Step 2.] {\em Market makers' best response.} Recall that
		\bean
		h^*(t,y)&=&\int_{-\infty}^{\infty}f(z)q(e^{-2rt}, z-y\sqrt{1+2re^{-2rt}})dz\\
		&=&\int_{-\infty}^{\infty}f(z+y\sqrt{1+2re^{-2rt}})q(e^{-2rt}, z)dz,
		\eean
		which shows that $h^*$ is strictly increasing since $f$ is. All that remains to show now is that $h^*(t,Y^*_t)$ is a $(\cF^{Y^*}_t, \bbP)$-martingale converging to $f(v)$.  
		
		It follows from Theorem \ref{t:gbridge} and the disintegration formula (\ref{e:disintegration}) that for any bounded and measurable $F$ we have
		\bean
		\bbE\left[F(Y^*_t)|\cF^{Y^*}_s\right]&=&\int_{\bbR}\bbE^v\left[F(Y^*_t)|\cF^{Y^*}_s\right]\nu(dv)\\
		&=&\int_{\bbR}\frac{\bbE^{\bbQ}\left[F(Y_t)p(V(\infty)-V(t),Y_t,f^{-1}(v))|\cF^{Y}_s\right]}{p(V(\infty)-V(t),Y_s,f^{-1}(v))}\nu(dv).
		\eean
		As we did before let $R_t=Y_{V^{-1}(t)}$, where $Y$ is the solution of (\ref{e:genYtc}). Thus,
		\[
		dR_t=d\beta_t +rR_tdt,
		\]
		for some Brownian motion $\beta$, which in particular implies $Y$ under $\bbQ$ is normal with mean $0$ and variance equalling
		\[
		\frac{e^{2rV(t)}-1}{2r}=\frac{1}{1+2re^{-2rt}}.
		\]
		 Thus,
		\bean
		\bbE\left[F(Y^*_t)|\cF^{Y^*}_s\right]&=&\int_{\bbR}\frac{\bbE^{\bbQ}\left[F(R_{V(t)})p(V(\infty)-V(t),R_{V(t)},f^{-1}(v))|\cF^{Y}_s\right]}{p(V(\infty)-V(t),R_{V(s)},f^{-1}(v))}\nu(dv)\\
		&=&\int_{\bbR}\int_{\bbR}f(y)\frac{p(V(\infty)-V(t),y,f^{-1}(v))p(V(t)-V(s),R_{V(s)},y)}{p(V(\infty)-V(s),R_{V(s)},f^{-1}(v))}dy\nu(dv)\\
		&=&\int_{\bbR}\bbE^{\bbQ}\left[F(R_{V(t)})\big|R_{V(s)}, R_{V(\infty)}=f^{-1}(v)\right]\nu(dv)\\
		&=&\bbE^{\bbQ}\left[F(R_{V(t)})\big|R_{V(s)}\right],
		\eean
		where the last line is due to the fact that $R_{V(\infty)}$ is a standard normal random variable due to the choice of $C^2$, and $f^{-1}(\Gamma)$ has a standard normal distribution by Assumption \ref{a:noatom}. Thus, the processes  $Y^*$ and $R_{V(\cdot)}$ have the same law. This  implies that $Y^*$ satisfies (\ref{e:Y*proj}) and, in particular, $h^*(t,Y^*_t)$ is a $(\cF^{Y^*}_t, \bbP)$-martingale. The convergence is immediate since $\lim_{t \rar \infty} Y^*_t=v$, $\bbP^v$-a.s., by Theorem \ref{t:gbridge}.
	\end{itemize}

\end{document}